\documentclass[conference]{IEEEtran}
\IEEEoverridecommandlockouts

\usepackage{multicol}
\usepackage{cite}
\usepackage{amsmath}
\usepackage{amsfonts}
\usepackage{pifont}
\usepackage{amssymb}
\usepackage{amsthm}
\usepackage{tikz}
\usepackage{cases}
\usepackage{graphicx}
\usepackage{float,stfloats}
    \graphicspath{{../}}
    \DeclareGraphicsExtensions{.pdf}
\usepackage[justification=centering]{caption}
\usepackage[caption=false,font=footnotesize]{subfig}
\usepackage{multirow}
\usepackage{booktabs}
\usepackage{url}
\usepackage{xtab}
\usepackage{tabu}
\usepackage{longtable}
\usepackage{algorithm}
\usepackage{algorithmic}
\usepackage{enumerate}
\usepackage{makecell}
\usepackage{lipsum}
\usepackage{multicol}
\usepackage{mathdots}
\usepackage{extarrows}
\usepackage{color,xcolor}
\usepackage{bm}
\usetikzlibrary{arrows}
\usepackage{caption}
\usepackage[numbers,sort&compress]{natbib}
\captionsetup[figure]{name={\scriptsize Fig.}, labelsep=period, singlelinecheck=off}
\captionsetup[table]{name={\scriptsize Table}, labelsep=period, singlelinecheck=off}
\theoremstyle{plain}

\newtheorem{theorem}{Theorem}
\newtheorem{lemma}[theorem]{Lemma}

\newtheorem{corollary}[theorem]{Corollary}

\theoremstyle{definition}
\newtheorem{definition}{Definition}

\newtheorem{construction}{Construction}

\newtheorem{remark}{Remark}


\def\BibTeX{{\rm B\kern-.05em{\sc i\kern-.025em b}\kern-.08em
		T\kern-.1667em\lower.7ex\hbox{E}\kern-.125emX}}

\begin{document}
	\title{A Formula for the I/O Cost of Linear Repair Schemes and Application to Reed-Solomon Codes}
	
	\author{\IEEEauthorblockN{Zhongyan Liu, Zhifang Zhang}\\
\IEEEauthorblockA{\fontsize{9.8}{12}\selectfont KLMM, Academy of Mathematics and Systems Science, Chinese Academy of Sciences, Beijing 100190, China\\
School of Mathematical Sciences, University of Chinese Academy of Sciences, Beijing 100049, China\\
Emails: liuzhongyan@amss.ac.cn, zfz@amss.ac.cn}
}
	\maketitle
	
	\thispagestyle{empty}

\begin{abstract}
Node repair is a crucial problem in erasure-code-based distributed storage systems. An important metric for repair efficiency is the I/O cost which equals the total amount of data accessed at helper nodes to repair a failed node. In this work, a general formula for computing the I/O cost of linear repair schemes is derived from a new perspective, i.e., by investigating the Hamming weight of a related linear space. Applying the formula to Reed-Solomon (RS) codes, we obtain lower bounds on the I/O cost for full-length RS codes with two and three parities. Furthermore, we build linear repair schemes for the RS codes with improved I/O cost. For full-length RS codes with two parities, our scheme meets the lower bound on the I/O cost.
 \end{abstract}
\begin{IEEEkeywords}
	Reed-Solomon codes, optimal access, distributed storage system.
\end{IEEEkeywords}

\section{Introduction}\label{Sec1}
Erasure codes, especially the Maximum Distance Separable (MDS) codes, are extensively used in large-scale distributed storage systems (DSS) to ensure fault-tolerant storage with low redundancy. Specifically, the original file consisting of $k$ blocks is encoded into $n$ blocks using an MDS code, and then stored across $n$ nodes such that any $k$ nodes suffice to recover the original file. A crucial problem in the erasure-code-based DSS is the efficient node repair from helper nodes. Suppose each block is composed of $\ell$ data units. During the repair process, each helper node transmits $\leq \ell$ units computed from its storage. Two important metrics for the repair efficiency are the total amount of data transmitted during the repair process (i.e., repair bandwidth) and the volume of data accessed at the helper nodes (i.e., I/O cost).

In \cite{csbound}, Dimakis et al. proved the cut-set bound to characterize the optimal repair bandwidth. Then a lot of work are devoted to design linear array codes that meet the cut-set bound \cite{RCviaPM,Zigzag,Ye-OAMSR,Li-OAMSR}, some of which even attain the optimal I/O cost at the same time \cite{Ye-OAMSR,Li-OAMSR}. However, these specially designed array codes are less appealing in industry because the infrastructure has been adapted to some classical MDS codes, such as Reed-Solomon (RS) codes.
Therefore, it is more significant to develop efficient repair schemes for currently used erasure codes.

Shanmugam et al. \cite{scalarMDS} first  established a framework of linear repair schemes for scalar MDS codes. Then Guruswami and Wootters \cite{RSrepair} derived a characterization of linear repair schemes from dual codewords. They also designed a repair scheme for $[n,k]$ RS codes over $\mathbb{F}_{q^\ell}$ with $n-k\geq q^{\ell-1}$, and proved their scheme achieves the optimal repair bandwidth for the full-length RS codes. This work was later extended to $n-k\geq q^s$ for any $s<\ell$ in \cite{obRS}. Although the repair bandwidth in \cite{RSrepair,obRS} is optimal for the full-length case, it is far from the cut-set bound due to the limitation $\ell={\rm O}(\log n)$. For sufficiently large $\ell\gtrsim n^n$, Tamo et al. \cite{RScsbound1} designed a family of RS codes with the repair bandwidth  achieving the cut-set bound.

How well RS codes perform when taking into account the I/O cost is an open problem raised by Guruswami and Wootters \cite{RSrepair}. Computing the I/O cost of existing repair schemes is already a difficult problem, let alone designing repair schemes with lower I/O cost. In \cite{IOfulllength}, the authors showed the repair schemes in \cite{RSrepair} and \cite{obRS} for full-length RS codes incur a trivial I/O cost, i.e., the entire file needs to be read to repair a single node. Dau et al. considered the case of full-length RS codes with two parities in \cite{fullr=2}. They derived a lower bound on the I/O cost and provided repair schemes that achieve this bound. However, their bound and repair schemes only work for the field of characteristic $2$. Then, Li et al. \cite{shortr=2} extended the lower bound  to the short-length RS codes (i.e., the evaluation-set is a subspace of $\mathbb{F}_{q^\ell}$) with two parities. Also, their lower bound only works for $q=2$, and no matching repair schemes are given. In \cite{oaRS}, Chen et al. enhanced the repair scheme in \cite{RScsbound1} with the optimal access property (i.e., the I/O cost equals the repair bandwidth, both matching the cut-set bound), by further enlarging $\ell$ by a factor exponential in $n$. Unfortunately, such a large $\ell$ is infeasible in practical use.

\subsection{Contributions}
In this work, we study the I/O cost of linear repair schemes for RS codes. Our contributions are as follows.
\begin{enumerate}
    \item {\bf A general formula for the I/O cost of linear repair schemes.}
    Before our work, the authors in \cite{IOfulllength} and \cite{fullr=2} computed the I/O cost directly from the definition where the volume of data accessed at each helper node is calculated independently. Our formula relates the I/O cost to the Hamming weight of a linear space involving all nodes which makes the summation easier to compute.
    \item {\bf Lower bounds on the I/O cost for RS codes with two and three parities.} Compared to the bound derived in \cite{fullr=2}, we remove the restriction of field characteristic $2$ and provide a much simpler proof. Along the line, we also obtain a lower bound on the I/O cost for the full-length RS codes with three parities.
    \item {\bf Repair schemes for RS codes with lower I/O cost.} Our constructions towards reducing the I/O cost are inspired from the formula. For the case of full-length and two parities, our scheme matches the proved lower bound and thus reaches the optimal I/O cost. For three parities and $q=2$, our scheme accesses $2^{\ell-3}$ bits higher than the proved lower bound. However, the I/O cost is still $2^\ell-\ell$ bits lower than that of the scheme in \cite{obRS}. In general, our scheme has the repair bandwidth equal to the I/O cost, which implies a repair-by-transfer process. Moreover, the sum of repair bandwidth and I/O cost of our scheme is lower than the schemes proposed in \cite{RSrepair} and \cite{obRS}, which means the reduction of I/O cost is more significant than the sacrifice in the repair bandwidth.
    \end{enumerate}
The remaining of the paper is organized as follows. Section \ref{Sec2} introduces preliminaries of repair schemes. Section \ref{Sec3} presents the formula for the I/O cost. Then Section \ref{Sec4} proves the lower bounds and Section \ref{Sec5} presents the repair scheme. Finally Section \ref{Sec6} concludes the paper.

\section{Preliminaries}\label{Sec2}
For positive integers $m\leq n$, denote $[n]=\{1,...,n\}$ and $[m,n]=\{m,m+1,...,n\}$. Let $B=\mathbb{F}_q$ be the finite field of $q$ elements and $F=\mathbb{F}_{q^\ell}$ be the extension field of $B$ with degree $\ell$. It is known that $F$ is an $\ell$-dimensional linear space over $B$. Thus, elements in $F$ are also viewed as vectors of length $\ell$ over $B$. More specifically, let $\mathcal{B}=\{\beta^{(1)},...,\beta^{(\ell)}\}$ be a basis of $F$ over $B$ and $\hat{\mathcal{B}}=\{\gamma^{(1)},...,\gamma^{(\ell)}\}$ be the dual basis of $\mathcal{B}$, i.e., ${\rm Tr}(\beta^{(i)}\gamma^{(j)})={\bm 1}_{i=j}$, where ${\rm Tr}(\cdot)$ is the trace function of $F$ over $B$, then it holds $\alpha=\sum_{i=1}^\ell{\rm Tr}(\alpha\gamma^{(i)})\beta^{(i)}$ for any $\alpha\in F$. Define a map $\Phi_{\mathcal{B}}$ from $F$ to $B^\ell$ such that \begin{equation}\Phi_{\mathcal{B}}(\alpha)=({\rm Tr}(\alpha\gamma^{(1)}),...,{\rm Tr}(\alpha\gamma^{(\ell)}))\;,~\forall \alpha\in F.\label{eq1-}\end{equation}
Actually, $\Phi_{\mathcal{B}}$ is a $B$-linear bijection and thus called the vector representation of elements of $F$ over $B$ with respect to the basis $\mathcal{B}$. For simplicity, we also define $\Phi_{\mathcal{B}}({\bm \alpha})=(\Phi_{\mathcal{B}}(\alpha_1),...,\Phi_{\mathcal{B}}(\alpha_n))$ for any ${\bm \alpha}=(\alpha_1,...,\alpha_n)\in F^n$.

 All vectors throughout are treated as row vectors and denoted by bold letters. For a vector of length $m$, say, $\bm{x}=(x_1,...,x_m)$, define ${\rm supp}(\bm{x})=\{j\in[m]:x_j\neq 0\}$ and ${\bf wt}(\bm{x})=|{\rm supp}(\bm{x})|$. Furthermore, for a set of vectors $W\subseteq B^m$,  define $\mathrm{supp}(W)=\bigcup_{\bm{x}\in W}\mathrm{supp}(\bm{x})$ and ${\bf wt}(W)=\sum_{\bm{x}\in W}{\bf wt}(\bm{x})$. Additionally, let ${\rm span}_{B}(W)$ be the $B$-linear space spanned by the vectors in $W$, and ${\rm dim}_{B}(W)$ denote the corresponding rank over $B$.

Throughout we study the repair schemes of Reed-Solomon codes defined over \!$F$ and the I/O cost (also the repair bandwidth) is measured in the number of elements in $B$. For simplicity, the elements in $F$ are called symbols and those in $B$ are called subsymbols.

\subsection{Linear repair scheme for MDS codes}
In \cite{RSrepair}, Guruswami and Wootters derived a characterization of the linear repair schemes. Specifically, they proved the following theorem.
\begin{lemma}[Guruswami-Wootters \cite{RSrepair}]\label{lem1}
Let $\mathcal{C}$ be an $[n,k]$ MDS code over $F$, and $B\subseteq F$ be a subfield of $F$ with $[F:B]=\ell$. Then for each node $i^*\in[n]$, the following are equivalent.
\begin{enumerate}
  \item[(1)] There is a linear repair scheme for node $i^*$ over $B$ with bandwidth $b$.
  \item[(2)] There exist $\ell$ dual codewords ${\bm g}^{(1)},...,{\bm g}^{(\ell)}\in\mathcal{C}^\bot$, where ${\bm g}^{(j)}\!\!=\!\!({\bm g}_{1}^{(j)},...,{\bm g}_{n}^{(j)})\!\in\! F^n$\! for\! $j\!\!\in\!\![\ell]$, such that ${\rm dim}_B\big(\{{\bm g}_{i^*}^{(j)}\}_{j=1}^\ell\big)=\ell$, and $$b=\sum_{i\in[n]\setminus\{i^*\}}{\rm dim}_B\big(\{{\bm g}_{i}^{(j)}\}_{j=1}^\ell\big).$$
\end{enumerate}
\end{lemma}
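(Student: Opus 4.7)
The plan is to use trace duality as the dictionary between the two conditions. Recall that the trace form $(x,y)\mapsto \mathrm{Tr}(xy)$ is a nondegenerate $B$-bilinear pairing on $F$, so $y\mapsto \mathrm{Tr}(y\,\cdot\,)$ is a $B$-linear isomorphism $F\to\mathrm{Hom}_B(F,B)$. Under this identification, every $B$-linear functional on a single coordinate of $F^n$ is represented by a unique element of $F$, and the downloads of a linear repair scheme are exactly the values $\mathrm{Tr}(g\,c_i)$ for the representing $g$'s. With this in hand both directions reduce to bookkeeping.

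For $(2)\Rightarrow(1)$, I would start from $\sum_i g_i^{(j)}c_i=0$, rewrite it as $\mathrm{Tr}(g_{i^*}^{(j)}c_{i^*})=-\sum_{i\neq i^*}\mathrm{Tr}(g_i^{(j)}c_i)$, and use that $\{g_{i^*}^{(j)}\}_{j=1}^{\ell}$ is a $B$-basis of $F$ (by the dimension hypothesis) to conclude $c_{i^*}$ is determined by the $\ell$ trace values on the left. From helper $i$ the repairer must recover $\ell$ numbers $\{\mathrm{Tr}(g_i^{(j)}c_i)\}_j$; by trace duality the $B$-space of functionals $c_i\mapsto\mathrm{Tr}(g_i^{(j)}c_i)$ has dimension $\dim_B\{g_i^{(j)}\}_j$, so exactly that many subsymbols suffice at node $i$, yielding the claimed total $b$.

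For $(1)\Rightarrow(2)$, I would encode the scheme: at helper $i$, pick $B$-linearly independent elements $g_i^{(1)},\ldots,g_i^{(b_i)}\in F$ so that the downloaded subsymbols are $\mathrm{Tr}(g_i^{(t)}c_i)$, with $\sum_{i\neq i^*} b_i=b$. Let $\gamma^{(1)},\ldots,\gamma^{(\ell)}$ be any $B$-basis of $F$, so recovering $c_{i^*}$ is the same as recovering $\mathrm{Tr}(\gamma^{(j)}c_{i^*})$ for each $j$, and write the $B$-linear recovery rule as scalars $\lambda_{i,t}^{(j)}\in B$ with
\[
\mathrm{Tr}(\gamma^{(j)}c_{i^*})=\sum_{i\neq i^*}\sum_t \lambda_{i,t}^{(j)}\mathrm{Tr}(g_i^{(t)}c_i)\quad\text{for all }\bm{c}\in\mathcal{C}.
\]
Setting $h_i^{(j)}=\sum_t\lambda_{i,t}^{(j)}g_i^{(t)}$ and defining $\bm{g}^{(j)}$ by $g_{i^*}^{(j)}=\gamma^{(j)}$ and $g_i^{(j)}=-h_i^{(j)}$ for $i\neq i^*$, the identity becomes $\mathrm{Tr}\bigl(\sum_i g_i^{(j)}c_i\bigr)=0$ for every $\bm{c}\in\mathcal{C}$.

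The main obstacle is promoting this trace-zero identity to $\sum_i g_i^{(j)}c_i=0$, i.e., actual membership in $\mathcal{C}^\perp$. Here I would exploit that $\mathcal{C}$ is $F$-linear: replacing $\bm{c}$ by $\mu\bm{c}$ for an arbitrary $\mu\in F$ yields $\mathrm{Tr}\bigl(\mu\sum_i g_i^{(j)}c_i\bigr)=0$ for every $\mu$, and nondegeneracy of the trace form forces the inner sum to vanish. Then $\dim_B\{g_{i^*}^{(j)}\}_j=\ell$ holds by construction, and $\dim_B\{g_i^{(j)}\}_j\le b_i$ because each $g_i^{(j)}=-h_i^{(j)}$ lies in $\mathrm{span}_B\{g_i^{(t)}\}_{t=1}^{b_i}$; assuming the original scheme has no redundant downloads (or, equivalently, interpreting $b$ as the achievable bandwidth) gives equality and recovers the bandwidth formula in (2).
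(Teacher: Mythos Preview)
The paper does not prove this lemma; it is quoted from \cite{RSrepair} and followed only by a short explanation of the direction $(2)\Rightarrow(1)$: from $\sum_i \bm{g}_i^{(j)}\bm{c}_i=0$ one takes traces to get $\mathrm{Tr}(\bm{g}_{i^*}^{(j)}\bm{c}_{i^*})=-\sum_{i\neq i^*}\mathrm{Tr}(\bm{g}_i^{(j)}\bm{c}_i)$, and the full-rank condition at $i^*$ lets one recover $\bm{c}_{i^*}$ from those $\ell$ traces. Your $(2)\Rightarrow(1)$ argument is exactly this, and is correct.

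For $(1)\Rightarrow(2)$, which the paper omits entirely, your argument is the standard one from \cite{RSrepair}: represent each downloaded subsymbol as a trace functional, write the $B$-linear reconstruction rule, and use $F$-linearity of $\mathcal{C}$ together with nondegeneracy of the trace form to promote the trace-zero identity to genuine membership in $\mathcal{C}^\perp$. This is correct and is precisely how Guruswami and Wootters do it. Your caveat at the end---that the construction a priori yields $\sum_{i\neq i^*}\dim_B\{g_i^{(j)}\}_j\le b$, with equality only when the original scheme carries no redundant downloads---is the right thing to flag; the equivalence as stated is really between schemes of bandwidth $b$ with no wasted symbols and dual-codeword families whose dimension sum is $b$ (equivalently, both sides characterize the same set of achievable bandwidths).
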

In more detail, for any codeword ${\bm c}=({\bm c}_1,...,{\bm c}_n)\in\mathcal{C}$, it holds ${\bm c}_{i^*}{\bm g}^{(j)}_{i^*}=-\sum_{i\neq i^*}{\bm c}_{i}{\bm g}^{(j)}_{i}$ for $j\in[\ell]$. Applying the trace function to both sides of the identities, one obtains
\begin{equation}\label{eq1}
{\rm Tr}({\bm c}_{i^*}{\bm g}^{(j)}_{i^*})=-\sum_{i\neq i^*}{\rm Tr}({\bm c}_{i}{\bm g}^{(j)}_{i}),~~\forall j\in[\ell]\;.
\end{equation}
If ${\rm dim}_B(\{{\bm g}_{i^*}^{(j)}\}_{j=1}^\ell)=\ell$, then the $\ell$ traces on the left side of the identities in (\ref{eq1}) suffice to recover ${\bm c}_{i^*}$. To this end, each helper node $i\in[n]\setminus\{i^*\}$ transmits $\{{\rm Tr}({\bm c}_{i}{\bm g}^{(j)}_{i})\}_{j=1}^\ell\subseteq B$.

Since the dual code is a more common object in coding theory, linear repair schemes, especially those for scalar MDS codes such as RS codes, are mostly described by the dual codewords in the literature. In this work, we focus on the I/O cost of these repair schemes. Next we review the definition of the I/O cost given in \cite{IOfulllength} and restate some basic results about the I/O cost using the notations defined above.

\begin{definition}[\cite{IOfulllength}]
The (read) I/O cost of a function $f(\cdot)$ with respect to a basis $\mathcal{B}$ is the minimum number of subsymbols of $\alpha\in F$ needed to compute $f(\alpha)$. The I/O cost of a set of functions $\mathcal{F}$ is the minimum number of subsymbols of $\alpha$ needed for the computation of $\left\{f(\alpha):f\in\mathcal{F}\right\}$.
\end{definition}

\begin{lemma}[\cite{IOfulllength}]The following statements hold.
\begin{enumerate}
    \item The I/O cost of the trace functional  $\mathrm{Tr}^{\gamma}(\alpha)\triangleq\mathrm{Tr}(\gamma\alpha)$ with respect to $\mathcal{B}$ is ${\bf wt} \big(\Phi_{\hat{\mathcal{B}}}(\gamma)\big)$.
    \item The I/O cost of the set of trace functionals $\{\mathrm{Tr}^{\gamma}(\cdot):\gamma\in\Gamma\}$ with respect to $\mathcal{B}$ is $|\mathrm{supp}(\{\Phi_{\hat{\mathcal{B}}}(\gamma):\gamma\in\Gamma\})|.$
\end{enumerate}
\end{lemma}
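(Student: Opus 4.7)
The plan is to read off both parts from the explicit expansion $\alpha=\sum_{i=1}^{\ell}{\rm Tr}(\alpha\gamma^{(i)})\beta^{(i)}$ given in the setup, which realizes the ``subsymbols of $\alpha$ with respect to $\mathcal{B}$'' as the coordinates of $\Phi_{\mathcal{B}}(\alpha)$. First I would substitute this expansion into ${\rm Tr}^{\gamma}(\alpha)={\rm Tr}(\gamma\alpha)$ and pull each subsymbol ${\rm Tr}(\alpha\gamma^{(i)})\in B$ outside the trace, yielding the master identity
$${\rm Tr}^{\gamma}(\alpha)=\sum_{i=1}^{\ell}{\rm Tr}(\gamma\beta^{(i)})\cdot{\rm Tr}(\alpha\gamma^{(i)}).$$
Here the coefficient ${\rm Tr}(\gamma\beta^{(i)})$ is exactly the $i$-th coordinate of $\Phi_{\hat{\mathcal{B}}}(\gamma)$, since $\mathcal{B}$ is the dual basis of $\hat{\mathcal{B}}$.

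For part 1, the identity immediately yields the upper bound: reading only the subsymbols indexed by ${\rm supp}(\Phi_{\hat{\mathcal{B}}}(\gamma))$ suffices to evaluate ${\rm Tr}^{\gamma}(\alpha)$, so the I/O cost is at most ${\bf wt}(\Phi_{\hat{\mathcal{B}}}(\gamma))$. For the matching lower bound I would use a distinguishing-pair argument: if some $i\in{\rm supp}(\Phi_{\hat{\mathcal{B}}}(\gamma))$ were not accessed, the two inputs $\alpha_1=0$ and $\alpha_2=\beta^{(i)}$ share every accessed subsymbol but satisfy ${\rm Tr}^{\gamma}(\alpha_1)=0\neq\Phi_{\hat{\mathcal{B}}}(\gamma)_i={\rm Tr}^{\gamma}(\alpha_2)$, so no reconstruction function of the accessed subsymbols can output the correct value on both.

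Part 2 then follows by aggregation. The upper bound is obtained by reading the union ${\rm supp}(\{\Phi_{\hat{\mathcal{B}}}(\gamma):\gamma\in\Gamma\})$ of the individual supports and then evaluating each ${\rm Tr}^{\gamma}$ via the master identity of part 1. The lower bound applies the same distinguishing argument one $\gamma$ at a time: any index in the union lies in ${\rm supp}(\Phi_{\hat{\mathcal{B}}}(\gamma_0))$ for some $\gamma_0\in\Gamma$, and failing to read it prevents the computation of ${\rm Tr}^{\gamma_0}(\alpha)$, hence of the whole set.

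The step I expect to need the most care is the lower bound, because the I/O model permits an arbitrary (not necessarily $B$-linear) reconstruction function on the read subsymbols; the distinguishing-pair argument resolves this, but it has to be phrased so that it applies uniformly to any such reconstruction, rather than arguing merely that a linear combination with zero coefficient at position $i$ is impossible. Everything else is a direct rewriting using the dual-basis relation ${\rm Tr}(\beta^{(i)}\gamma^{(j)})={\bm 1}_{i=j}$ fixed in the preliminaries.
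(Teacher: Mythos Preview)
The paper does not prove this lemma; it is quoted from \cite{IOfulllength} and stated without proof, so there is no argument in the paper to compare against. That said, your proposal is correct and is essentially the standard proof: the expansion $\alpha=\sum_i{\rm Tr}(\alpha\gamma^{(i)})\beta^{(i)}$ gives the master identity, the upper bound is immediate, and your distinguishing-pair argument (with $\alpha_1=0$, $\alpha_2=\beta^{(i)}$) correctly forces every index in ${\rm supp}(\Phi_{\hat{\mathcal{B}}}(\gamma))$ into any valid access set, handling arbitrary (not merely $B$-linear) reconstruction functions as required by the I/O model.
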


\begin{lemma}[\cite{shortr=2}]\label{lem3}
The I/O cost of the repair scheme of node $i^*$ based on a set of dual codewords $\{{\bm g}^{(j)}\}_{j=1}^\ell\subseteq \mathcal{C}^\bot$ with respect to $\mathcal{B}$ is $\gamma_{I/O}=\sum_{i\in[n]\setminus\left\{i^{*}\right\}}\mathrm{nz}(W_i)$, where $\mathrm{nz}(W_i)$ specifies the number of nonzero columns in the $\ell\times\ell $ I/O matrix $W_i$ defined as
\fontsize{8pt}{2pt}
 { \begin{align}\label{W_j}
   W_i\!\!=\!\!\!\left(\!{
             \setlength{\arraycolsep}{1pt}\begin{array}{c}
               \Phi_{\hat{\mathcal{B}}}({\bm g}_{i}^{(1)})  \\
               \Phi_{\hat{\mathcal{B}}}({\bm g}_{i}^{(2)}) \\
                  \vdots\\
               \Phi_{\hat{\mathcal{B}}}({\bm g}_{i}^{(\ell)})
             \end{array}
          }\!\right)
          \!\!=\!\!\left(\! {\setlength{\arraycolsep}{1pt}\begin{array}{cccc}\mathrm{Tr}({\bm g}_{i}^{(1)}\beta^{(1)})&\mathrm{Tr}({\bm g}_{i}^{(1)}\beta^{(2)}) &  \cdots &\mathrm{Tr}({\bm g}_{i}^{(1)}\beta^{(\ell)})\\
\mathrm{Tr}({\bm g}_{i}^{(2)}\beta^{(1)})& \mathrm{Tr}({\bm g}_{i}^{(2)}\beta^{(2)})& \cdots &\mathrm{Tr}({\bm g}_{i}^{(2)}\beta^{(\ell)})\\
 \vdots & \vdots & \ddots & \vdots \\
 \mathrm{Tr}({\bm g}_{i}^{(\ell)}\beta^{(1)})&\mathrm{Tr}({\bm g}_{i}^{(\ell)}\beta^{(2)})&\cdots &\mathrm{Tr}({\bm g}_{i}^{(\ell)}\beta^{(\ell)})\end {array}}\!\right).
\end{align}}
\end{lemma}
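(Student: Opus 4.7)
The plan is to reduce the I/O cost of the whole repair scheme to a sum of local I/O costs, one per helper node, and then apply Lemma 2 to each local problem. Concretely, in the repair scheme described after Lemma 1, each helper node $i\in[n]\setminus\{i^*\}$ must transmit the $\ell$ subsymbols $\{{\rm Tr}({\bm c}_i{\bm g}_i^{(j)})\}_{j=1}^{\ell}$, which is exactly the evaluation at $\alpha={\bm c}_i$ of the set of trace functionals $\mathcal{F}_i=\{{\rm Tr}^{{\bm g}_i^{(j)}}(\cdot):j\in[\ell]\}$. Since the helper nodes read independently from their local storage, the total I/O cost is $\gamma_{I/O}=\sum_{i\in[n]\setminus\{i^*\}}\mathrm{IO}(\mathcal{F}_i)$, where $\mathrm{IO}(\mathcal{F}_i)$ denotes the I/O cost of $\mathcal{F}_i$ with respect to $\mathcal{B}$. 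Therefore it suffices to show $\mathrm{IO}(\mathcal{F}_i)=\mathrm{nz}(W_i)$ for each $i$.

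By Lemma 2(2), $\mathrm{IO}(\mathcal{F}_i)=|\mathrm{supp}(\{\Phi_{\hat{\mathcal{B}}}({\bm g}_i^{(j)}):j\in[\ell]\})|$. The next step is to identify these vectors with the rows of $W_i$. Using that the dual basis of $\hat{\mathcal{B}}$ is $\mathcal{B}$ itself (a direct consequence of ${\rm Tr}(\beta^{(i)}\gamma^{(j)})={\bm 1}_{i=j}$), the definition of $\Phi_{(\cdot)}$ in (\ref{eq1-}) yields $\Phi_{\hat{\mathcal{B}}}({\bm g}_i^{(j)})=({\rm Tr}({\bm g}_i^{(j)}\beta^{(1)}),\ldots,{\rm Tr}({\bm g}_i^{(j)}\beta^{(\ell)}))$, which is precisely the $j$-th row of $W_i$ as written out in (\ref{W_j}). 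Hence the set in question is exactly the row-set of $W_i$.

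It remains to observe that, for any finite set $W$ of vectors in $B^\ell$, $|\mathrm{supp}(W)|$ equals the number of coordinates at which at least one vector in $W$ is nonzero; when $W$ is taken to be the rows of an $\ell\times\ell$ matrix, this count is exactly the number of nonzero columns, namely $\mathrm{nz}(W_i)$. Combining the three steps gives $\mathrm{IO}(\mathcal{F}_i)=\mathrm{nz}(W_i)$ and hence the claimed formula.

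No step looks genuinely hard: the main thing to be careful about is the double-duality observation $\hat{\hat{\mathcal{B}}}=\mathcal{B}$, because it is what aligns the $\hat{\mathcal{B}}$-representation appearing in Lemma 2 with the trace-expansion against the basis $\mathcal{B}$ appearing in the definition of $W_i$. Once that identification is made, the proof is essentially a matter of packaging Lemma 2 coordinate-by-coordinate across helper nodes.
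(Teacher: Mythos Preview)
The paper does not give its own proof of Lemma~\ref{lem3}; it is quoted from \cite{shortr=2} and used as a black box. Your argument is correct and is exactly the natural derivation from Lemma~2: decompose the total I/O cost as the sum over helper nodes, apply Lemma~2(2) to the set of trace functionals $\{\mathrm{Tr}^{{\bm g}_i^{(j)}}\}_{j\in[\ell]}$ at each node, use the double-duality $\hat{\hat{\mathcal{B}}}=\mathcal{B}$ to identify $\Phi_{\hat{\mathcal{B}}}({\bm g}_i^{(j)})$ with the $j$-th row of $W_i$, and then read off $|\mathrm{supp}(\text{rows})|=\mathrm{nz}(W_i)$.
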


It can be seen  ${\rm rank}(W_i)\!=\!{\rm dim}_B\big(\{{\bm g}_{i}^{(j)}\}_{j=1}^\ell\big)$, so the repair bandwidth can be calculated as $b=\sum_{i\in[n]\setminus\left\{i^{*}\right\}}\mathrm{rank}(W_i)$. Moreover,
the I/O cost at each helper node is computed with respect to a fixed basis (which may be different from the basis used by other nodes). We assume throughout this work that all nodes use a common basis $\mathcal{B}$.

\section{The Formula of I/O Cost}\label{Sec3}
In this section, we derive a general formula (Theorem \ref{I/O}) for the I/O cost of the linear repair scheme based on dual codewords. Instead of counting the number of nonzero columns of the I/O matrices as in \cite{fullr=2} and \cite{shortr=2}, we reduce the problem to the calculation of Hamming weight of a linear space which is easier to handle in most cases.

We begin with a lemma that explores the relationship between the weight and support of a linear space, which is frequently used in subsequent proofs and constructions.

\begin{lemma}\label{wt}
Let $G$ be a $k\times m$ matrix over $B$ whose rows are linearly independent. Define $W\!=\!\{{\bm u}G:{\bm u}\!\in\! B^k\}$ and suppose $\bm{y}=(y_1,...,y_m)\in B^m$.

\noindent(i) $|{\rm supp}(W)|={\rm nz}(G)$.

\noindent(ii) If $j\in \mathrm{supp}(W)$, then each element of $B$ will appear $q^{k-1}$ times in the $j$-th position of $\bm{y}+W$; if $j\not\in \mathrm{supp}(W)$, then the $j$-th position of $\bm{y}+W$ is $y_j$.

\noindent(iii)
\resizebox{0.93\columnwidth}{!}{${\bf wt}({\bm y}\!+\!W)\!=\!|{\rm supp}(W)|q^{k\!-\!1}\!(q\!-\!1)\!+\!|{\rm supp}({\bm y})\!\setminus\!{\rm supp}(W)|q^k.$} Particularly, ${\bf wt}(W)\!=\!|{\rm supp}(W)|q^{k-1}(q\!-\!1)\!\geq\! kq^{k-1}(q\!-\!1)$.

\end{lemma}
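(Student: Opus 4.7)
The plan is to analyze $W$ column by column, exploiting the fact that restricting to the $j$-th coordinate is a linear functional $\pi_j\colon B^k\to B$ defined by $\pi_j(\bm u)=(\bm uG)_j$, whose kernel and image are entirely controlled by whether the $j$-th column of $G$ is zero or not. This single viewpoint drives all three parts, and the argument is essentially a direct counting once the right dichotomy is set up.

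For part (i), I would observe that $j\in\mathrm{supp}(W)$ means there exists $\bm u\in B^k$ with $\pi_j(\bm u)\neq 0$, which happens iff the $j$-th column of $G$ is nonzero; so $|\mathrm{supp}(W)|$ is exactly the number of nonzero columns of $G$, i.e.\ $\mathrm{nz}(G)$. For part (ii), I would use that when the $j$-th column of $G$ is nonzero, $\pi_j$ is a nonzero $B$-linear functional on $B^k$, hence surjective with every fiber of size $q^{k-1}$; so each value of $B$ is attained exactly $q^{k-1}$ times in the $j$-th coordinate of $W$, and translating by $y_j$ merely permutes these values and preserves the count. When the $j$-th column of $G$ vanishes, $\pi_j\equiv 0$, so the $j$-th coordinate of every element of $\bm y+W$ is simply $y_j$.

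Part (iii) then follows by summing weight contributions coordinate by coordinate over the $q^k$ elements of $\bm y+W$. Coordinates in $\mathrm{supp}(W)$ each contribute $(q-1)q^{k-1}$ nonzero entries (the nonzero elements of $B$, each appearing $q^{k-1}$ times), coordinates in $\mathrm{supp}(\bm y)\setminus\mathrm{supp}(W)$ each contribute a constant nonzero entry $y_j$ across all $q^k$ vectors, and the remaining coordinates contribute nothing. Combining these three tallies gives the claimed formula, and specializing to $\bm y=\bm 0$ yields ${\bf wt}(W)=|\mathrm{supp}(W)|q^{k-1}(q-1)$.

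The only nontrivial bit is the tail inequality $|\mathrm{supp}(W)|\geq k$, which I would handle by noting that since the rows of $G$ are linearly independent, the $k$ rows span a $k$-dimensional subspace of $B^m$, and this subspace can have at most $|\mathrm{supp}(W)|=\mathrm{nz}(G)$ coordinates on which anything can be nonzero; since any subspace is embedded in $B^{|\mathrm{supp}(W)|}$ via restriction to its support, we must have $|\mathrm{supp}(W)|\geq k$. I do not expect any real obstacle here; the main thing to be careful about is keeping the two cases (inside vs.\ outside $\mathrm{supp}(W)$) cleanly separated when computing the contribution from $\bm y$, since otherwise one can double-count the coordinates where $y_j\neq 0$ but $j\in\mathrm{supp}(W)$.
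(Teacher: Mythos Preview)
Your proposal is correct and follows essentially the same approach as the paper: both argue column by column, using that the $j$-th coordinate map $\bm u\mapsto(\bm uG)_j$ is either identically zero (when the $j$-th column of $G$ vanishes) or a surjective linear form with fibers of size $q^{k-1}$, and then sum the per-column contributions to obtain (iii). Your justification of $|\mathrm{supp}(W)|\geq k$ via embedding into $B^{|\mathrm{supp}(W)|}$ is a slightly more explicit version of what the paper dismisses as trivial.
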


\begin{proof}
Denote $G=\left(\bm{w}_1^\top,\bm{w}_2^\top,...,\bm{w}_m^\top\right)$, i.e., ${\bm w}_j^\top$ is the $j$-th column of $G$ for $j\in[m]$. It can be seen that $\bm{w}_j\neq \bm{0}$ if and only if $j\in {\rm supp}(W)$, then (i) follows. Moreover, If $\bm{w}_j\neq \bm{0}$, then for any $c\in B$, there are $q^{k-1}$ ${\bm u}$s such that ${\bm u}\bm{w}_j^\top=c$. Therefore, (ii) can be derived easily.

To prove (iii), we write all vectors in ${\bm y}+W$ in a $q^k\times m$ matrix $\mathcal{M}$ where each row is a vector in ${\bm y}+W$. Then ${\bf wt}({\bf y}+W)$ means the weight of $\mathcal{M}$ which can be calculated according to columns.
From (ii) we know, for $j\in {\rm supp}(W)$, the $j$-th column of $\mathcal{M}$ has weight $q^{k-1}(q-1)$; for $j\not\in {\rm supp}(W)$, the $j$-th column of $\mathcal{M}$ has weight $q^k{\bf wt}(y_j)$. Summing up the weight of all $m$ columns, we obtain ${\bf wt}({\bm y}+W)$ as in (iii). Furthermore, set ${\bm y}\in W$, then ${\bf wt}(W)$ is derived. Note it trivially holds $|{\rm supp}(W)|\geq {\rm dim}(W)$.
\end{proof}

Let $\mathcal{C}$ be a linear code of length $n$ defined over $F$ and $\mathcal{B}$ be a fixed basis of $F$ over $B$. We next study the I/O cost with respect to $\mathcal{B}$ of
the repair scheme of node $i^*$ based on the dual codewords  $\{{\bm g}^{(j)}\}_{j=1}^\ell\subseteq \mathcal{C}^\bot$.  Denote by $\hat{\mathcal{B}}$ the dual basis of $\mathcal{B}$. Then, define a matrix
\begin{equation}\label{eq8}
G_{i^*}=\left(\begin{array}{cccc}
       \Phi_{\hat{\mathcal{B}}}({\bm g}_{1}^{(1)})  & \Phi_{\hat{\mathcal{B}}}({\bm g}_{2}^{(1)})&\cdots &\Phi_{\hat{\mathcal{B}}}({\bm g}_{n}^{(1)})  \\

       \Phi_{\hat{\mathcal{B}}}({\bm g}_{1}^{(2)})  & \Phi_{\hat{\mathcal{B}}}({\bm g}_{2}^{(2)})&\cdots &\Phi_{\hat{\mathcal{B}}}({\bm g}_{n}^{(2)})  \\

       \vdots & \vdots & \ddots  & \vdots \\

         \Phi_{\hat{\mathcal{B}}}({\bm g}_{1}^{(\ell)})  & \Phi_{\hat{\mathcal{B}}}({\bm g}_{2}^{(\ell)})&\cdots &\Phi_{\hat{\mathcal{B}}}({\bm g}_{n}^{(\ell)})
    \end{array}\right)\;.\end{equation}
Actually, combining with the definition in (\ref{W_j}) we know $G_{i^*}=(W_1~W_2~\cdots~W_n)$.
Moreover, since ${\rm dim}_B(\{{\bm g}_{i^*}^{(j)}\}_{j=1}^\ell)=\ell$, it follows
$\Phi_{\hat{\mathcal{B}}}({\bm g}_{i^*}^{(1)}),\Phi_{\hat{\mathcal{B}}}({\bm g}_{i^*}^{(2)}),...,\Phi_{\hat{\mathcal{B}}}({\bm g}_{i^*}^{(\ell)})$ are linearly independent over $B$. Let $L_{i^*}$ be the linear space spanned by the rows of $G_{i^*}$ over $B$, i.e.,
\begin{align}\notag
    L_{i^*}&=\{{\bm u}G_{i^*}:{\bm u}\in B^\ell\}\\
    &\label{eq90}={\rm span}_B\big(\{\Phi_{\hat{\mathcal{B}}}({\bm g}^{(1)}),...,\Phi_{\hat{\mathcal{B}}}({\bm g}^{(\ell)})\}\big)\;.
\end{align}
Thus $L_{i^*}$ is an $\ell$-dimensional subspace in $B^{n\ell}$. Furthermore, we obtain the main result in this section.
\begin{theorem}\label{I/O}
Suppose $\{{\bm g}^{(j)}\}_{j=1}^\ell\subseteq \mathcal{C}^\bot$ defines a repair scheme of node $i^*$ in $\mathcal{C}$ and let $L_{i^*}$ be defined as in (\ref{eq90}).  Then the I/O cost of the repair scheme with respect to $\mathcal{B}$ is
    $$\gamma_{I/O}=\frac{{\bf wt}(L_{i^*})}{q^{\ell-1}(q-1)}-\ell.$$
\end{theorem}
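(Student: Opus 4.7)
The plan is to express the I/O cost through the Hamming weight of $L_{i^*}$ by chaining the three items of the weight lemma with the per-node formula in Lemma~\ref{lem3}. The starting point is
$$\gamma_{I/O}=\sum_{i\in[n]\setminus\{i^*\}}{\rm nz}(W_i),$$
and the key structural observation is that the concatenation $G_{i^*}=(W_1\ W_2\ \cdots\ W_n)$ has \emph{all $n\ell$ of its columns} as the columns of the individual $W_i$'s, so ${\rm nz}(G_{i^*})=\sum_{i=1}^{n}{\rm nz}(W_i)$. Hence it suffices to compute ${\rm nz}(G_{i^*})$ and subtract ${\rm nz}(W_{i^*})$.

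First I would verify that the $\ell$ rows of $G_{i^*}$ are linearly independent over $B$. This is where the repair-scheme hypothesis ${\rm dim}_B(\{{\bm g}_{i^*}^{(j)}\}_{j=1}^\ell)=\ell$ enters: since $\Phi_{\hat{\mathcal{B}}}$ is a $B$-linear bijection, the rows of $W_{i^*}$ are linearly independent over $B$, and linear independence of a sub-block forces linear independence of the full rows. As an immediate consequence, $W_{i^*}$ is an invertible $\ell\times\ell$ matrix over $B$, so every column of $W_{i^*}$ is nonzero and ${\rm nz}(W_{i^*})=\ell$. This gives
$$\sum_{i\in[n]\setminus\{i^*\}}{\rm nz}(W_i)={\rm nz}(G_{i^*})-\ell.$$

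Next I would invoke Lemma~\ref{wt}(i) with the matrix $G_{i^*}$ (rows independent over $B$) and the code $L_{i^*}=\{{\bm u}G_{i^*}:{\bm u}\in B^\ell\}$ from~(\ref{eq90}), giving $|{\rm supp}(L_{i^*})|={\rm nz}(G_{i^*})$. Then I would apply Lemma~\ref{wt}(iii) (taking ${\bm y}={\bm 0}$, so the second term vanishes) to translate the size of the support into Hamming weight:
$${\bf wt}(L_{i^*})=|{\rm supp}(L_{i^*})|\cdot q^{\ell-1}(q-1).$$
Chaining these two identities yields ${\rm nz}(G_{i^*})={\bf wt}(L_{i^*})/(q^{\ell-1}(q-1))$, and substituting into the displayed equation above delivers the claimed formula.

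The only subtle step is the full-rank justification for the rows of $G_{i^*}$; everything else is a bookkeeping substitution. I do not anticipate a real obstacle, since the ingredients (Lemma~\ref{lem3}, Lemma~\ref{wt}, and the $B$-linearity of $\Phi_{\hat{\mathcal{B}}}$) mesh cleanly and the invertibility of $W_{i^*}$ handles the isolation of the failed node's contribution.
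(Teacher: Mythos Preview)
Your proposal is correct and follows essentially the same approach as the paper's proof: both use Lemma~\ref{lem3} to write $\gamma_{I/O}=\sum_{i\neq i^*}{\rm nz}(W_i)$, identify ${\rm nz}(G_{i^*})=\sum_i{\rm nz}(W_i)$ with ${\rm nz}(W_{i^*})=\ell$ from the full-rank hypothesis, and then apply Lemma~\ref{wt} to convert ${\rm nz}(G_{i^*})=|{\rm supp}(L_{i^*})|$ into ${\bf wt}(L_{i^*})/(q^{\ell-1}(q-1))$. The only cosmetic difference is that the paper records the row independence of $G_{i^*}$ in the discussion preceding the theorem rather than inside the proof.
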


\begin{proof}
By Lemma \ref{wt}, we have
\begin{align}\notag
    {\bf wt}(L_{i^*})&=|{\rm supp}(L_{i^*})|q^{\ell-1}(q-1)\\
    &\label{eq9}={\rm nz}(G_{i^*})\cdot q^{\ell-1}(q-1)\;,
\end{align}
where $G_{i^*}$ is defined as in (\ref{eq8}). Moreover,
\begin{equation}
       {\rm nz}(G_{i^*})
       =\sum_{i=1}^n {\rm nz}( W_i)
       =\ell+\!\!\!\sum_{i\in[n]\setminus\left\{\!i^{*}\!\right\}}\!\!\!\mathrm{nz}(W_i)\;,\label{eq11}
\end{equation}
where the notation $W_i$ is defined as in (\ref{W_j}) in Lemma \ref{lem3}, and the second equality in (\ref{eq11}) follows from the fact ${\rm nz}(W_{i^*})=\ell$ due to ${\rm dim}_B(\{{\bm g}_{i^*}^{(j)}\}_{j=1}^\ell)=\ell$.
Finally, using Lemma \ref{lem3} and combining with (\ref{eq9}) and (\ref{eq11}), we have \begin{equation}\label{eq13}\gamma_{I/O}=\!\!\!\sum_{i\in[n]\setminus\left\{\!i^{*}\!\right\}}\!\!\!\mathrm{nz}(W_i)={\rm nz}(G_{i^*})-\ell=\frac{{\bf wt}(L_{i^*})}{q^{\ell-1}(q-1)}-\ell.\end{equation}\end{proof}

\begin{remark}
Note in \cite{fullr=2} and \cite{shortr=2} the I/O cost is calculated according to the first equality in (\ref{eq13}), i.e., by computing $\mathrm{nz}(W_i)$ separately for each $i\in[n]\setminus\{i^*\}$. In contrast, we take the failed node $i^*$ also into the calculation by investigating the linear space $L_{i^*}$. Note  $L_{i^*}$ is actually a $B$-linear subcode of $\mathcal{C}^\bot$ representing each coordinate as a vector in $B^\ell$ through the map $\Phi_{\hat{\mathcal{B}}}$. When $\mathcal{C}$ is an RS code, its dual code is well known, leading to an easier calculation of the I/O cost.
\end{remark}

\section{Lower Bound on The I/O Cost for RS Codes with Two and Three Parities}\label{Sec4}

Using the formula obtained in Theorem \ref{I/O}, in this section we establish lower bounds on the I/O cost for repairing full-length RS codes with two and three parities. Our bound for RS codes with two parities extends the bound derived in \cite{fullr=2} from fields of characteristic $2$ to any finite field. Moreover, the bound is tight due to a construction given later in Section \ref{Sec5}. However, our bound for RS codes with three parities only applies to fields of characteristic $2$ because we rely on the linearity of polynomials of degree $2$.

Let $\mathcal{A}=\left\{\alpha_1, \alpha_2 ,...,\alpha_n \right\} \subseteq F $ be the set of evaluation points. Then the RS code $RS(\mathcal{A},k)$ over $F$ is defined by
\begin{equation*}\label{RS}
RS(\mathcal{A},k)=\{(f(\alpha_1),...,f(\alpha_n)):f\in F[x], \mathrm{deg}(f)\leq k-1\}.
\end{equation*}
When $\mathcal{A}=F$, the RS code is called full-length. Moreover, the dual of a full-length RS code is still an RS code, i.e., $RS(F,k)^\bot=RS(F,q^\ell-k)$.

\begin{theorem}\label{r=2}
For $RS(F,q^\ell-2)$ over $F$, the I/O cost of any linear repair scheme satisfies:
      \begin{equation*}
        \gamma_{I/O} \geq (n-1)\ell-q^{\ell-1},
      \end{equation*}where $n=q^\ell$.
\end{theorem}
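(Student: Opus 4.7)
The plan is to specialize Theorem \ref{I/O} to $\mathcal{C}=RS(F,q^\ell-2)$ and exploit that $\mathcal{C}^\bot=RS(F,2)$ consists of evaluations of affine polynomials. Concretely, I would write each dual codeword as $\bm{g}^{(j)}=(a_j+b_j\alpha)_{\alpha\in F}$ for some $(a_j,b_j)\in F^2$, and set $V=\mathrm{span}_B\{(a_j,b_j)\}_{j=1}^\ell\subseteq F^2$, which must have $B$-dimension exactly $\ell$ for the scheme to be valid. Using that $\Phi_{\hat{\mathcal{B}}}$ is a $B$-linear bijection, the weight unpacks as ${\bf wt}(L_{i^*})=\sum_{(a,b)\in V}\sum_{\alpha\in F}w(a+b\alpha)$, where $w(\gamma):={\bf wt}(\Phi_{\hat{\mathcal{B}}}(\gamma))$.

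The key step is to split the outer sum according to whether $b=0$. When $b\neq 0$, the affine map $\alpha\mapsto a+b\alpha$ is a bijection on $F$, so the inner sum equals $\sum_{\gamma\in F}w(\gamma)={\bf wt}(B^\ell)=\ell q^{\ell-1}(q-1)$ by Lemma \ref{wt}(iii). When $b=0$, the inner sum is $n\cdot w(a)$. Setting $K=\{a\in F:(a,0)\in V\}$ with $d_0=\dim_B K$ and $s_0=|\mathrm{supp}(\Phi_{\hat{\mathcal{B}}}(K))|$, a second application of Lemma \ref{wt}(iii) to the $d_0$-dimensional subspace $\Phi_{\hat{\mathcal{B}}}(K)\subseteq B^\ell$ gives $\sum_{a\in K}w(a)=s_0q^{d_0-1}(q-1)$. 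Assembling the two contributions, dividing by $q^{\ell-1}(q-1)$, and subtracting $\ell$ as in Theorem \ref{I/O} should yield the compact identity
\begin{equation*}
\gamma_{I/O}=(n-1)\ell-q^{d_0}(\ell-s_0).
\end{equation*}

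The remaining step is a short optimization. Lemma \ref{wt}(iii) forces $s_0\geq d_0$, so $q^{d_0}(\ell-s_0)\leq q^{d_0}(\ell-d_0)$, and one checks that the integer function $f(d)=q^d(\ell-d)$ on $\{0,1,\ldots,\ell\}$ attains its maximum $q^{\ell-1}$ at $d=\ell-1$ (the ratio $f(d+1)/f(d)=q(\ell-d-1)/(\ell-d)$ pins the location). This immediately delivers $\gamma_{I/O}\geq(n-1)\ell-q^{\ell-1}$ for every choice of dual codewords and basis, with no hypothesis on $\mathrm{char}(F)$. The main obstacle I anticipate is spotting the clean split in the second paragraph: only the degree-$0$ part $K\subseteq V$ contributes a scheme-dependent term to ${\bf wt}(L_{i^*})$, because the degree-$1$ pieces average uniformly over $F$. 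Once this observation is secured, the rest is elementary and in particular avoids the linear-polynomial tricks peculiar to characteristic $2$ used in \cite{fullr=2}.
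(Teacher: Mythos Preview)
Your proof is correct and follows essentially the same route as the paper's: both reduce via Theorem~\ref{I/O} to ${\bf wt}(L_{i^*})$, split the sum according to whether the linear coefficient of $g_{\bm u}$ vanishes, use that a degree-one polynomial permutes $F$ to evaluate the $b\neq 0$ part uniformly as $\ell q^{\ell-1}(q-1)$, and then apply Lemma~\ref{wt} to the constant part before optimizing over its dimension. Your parametrization by $V\subseteq F^2$ (with $K$, $d_0$, $s_0$) is just a change of variables from the paper's $W$, $\sigma$, $m$ (indeed $d_0=\ell-m$ and $K=\sigma(W)$), and your intermediate identity $\gamma_{I/O}=(n-1)\ell-q^{d_0}(\ell-s_0)$ together with $s_0\geq d_0$ reproduces exactly the paper's bound $\gamma_{I/O}\geq (n-1)\ell-mq^{\ell-m}$, after which both arguments conclude by noting the maximum $q^{\ell-1}$ is attained at $m=1$ (equivalently $d_0=\ell-1$).
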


\begin{proof}
For simplicity, denote $F=\{\alpha_1\!=\!0,\alpha_2,...,\alpha_n\}$.
By \citep[Lemma 8]{IOfulllength}, the repair scheme of full-length RS codes has the same repair bandwidth and I/O cost at all nodes, so it suffices to just examine the repair scheme of node $0$ (corresponding to the evaluation point $\alpha_1=0$).

Let $\{{\bm g}^{(j)}\}_{j=1}^\ell$ be the $\ell$ dual codewords that define a repair scheme of node 0. Since $RS(F,q^\ell-2)^\bot=RS(F,2)$, we have ${\bm g}^{(j)}\!=(g_j(\alpha_1),...,g_j(\alpha_n))$, where $g_j(x)=\lambda_jx+\mu_j\in F[x]$ for some $\lambda_j,\mu_j\in F$. Moreover, the condition ${\rm dim}_B(\{g_{j}(0)\}_{j=1}^\ell)=\ell$ implies that ${\rm dim}_{B}(\{\mu_j\}_{j=1}^{\ell})=\ell$. To compute the I/O cost, by Theorem \ref{I/O} we need to compute ${\bf wt}(L_0)$, where
$L_0=\big\{\Phi_{\hat{\mathcal{B}}}(\sum_{j=1}^\ell u_j{\bm g}^{(j)}):{\bm u}=(u_1,...,u_\ell)\in B^\ell\big\}$.
For simplicity, define $g_{\bm u}(x)\!=\!\sum_{j=1}^\ell u_jg_j(x)$ and ${\bm g}_{\bm u}=(g_{\bm u}(\alpha_1),...,g_{\bm u}(\alpha_n))$. Then it follows
$$\textstyle{\bf wt}(L_0)=\sum_{{\bm u}\in B^\ell}{\bf wt}({\Phi}_{\hat{\mathcal{B}}}({\bm g}_{\bm u})).$$

Assume ${\rm dim}_{B}(\{\lambda_j\}_{j=1}^{\ell})=m$ and denote
$W=\{{\bm u}\in B^\ell:\sum_{j=1}^\ell u_j\lambda_j=0\}$. Then $\mathrm{dim}(W)=\ell-m$.  Next we compute ${\bf wt}({\Phi}_{\hat{\mathcal{B}}}({\bm g}_{\bm u}))$ with respect to ${\bm u}\in W$ and ${\bm u}\not\in W$, respectively.
\begin{enumerate}
\item[(1)]${\bm u}\in W$.

Then $g_{\bm u}(x)\!=\!\sum_{j=1}^\ell u_j\mu_j$ is a constant.
Define a map $\sigma$ from $B^\ell$ to $B^\ell$ such that  $\sigma(\bm u)=\Phi_{\hat{\mathcal{B}}}(\sum_{j=1}^\ell u_j\mu_j)$ for all ${\bm u}\in B^\ell$. It follows
${\bf wt}({\Phi}_{\hat{\mathcal{B}}}({\bm g}_{\bm u}))=q^\ell{\bf wt}(\sigma({\bm u}))$.

\item[(2)] ${\bm u}\not\in W$.

Then $g_{{\bm u}}(x)=\sum_{j=1}^\ell u_j\lambda_jx+\sum_{j=1}^\ell u_j\mu_j$ is a polynomial of degree one which is also a  bijection from $F$ to $F$, so ${\bm g}_{\bm u}$ is a vector in $F^n$ with all coordinates forming a permutation of $F$. Recall that ${\Phi}_{\hat{\mathcal{B}}}$ is a bijection from $F$ to $B^\ell$. Thus ${\bf wt}({\Phi}_{\hat{\mathcal{B}}}({\bm g}_{\bm u}))={\bf wt}(B^\ell)=\ell q^{\ell-1}(q-1)$, where the last equality is from Lemma \ref{wt}.
\end{enumerate}

Finally,
\fontsize{9.5pt}{1pt}
{\begin{align}
{\bf wt}(L_0)&=\sum_{{\bm u}\in W}{\bf wt}({\Phi}_{\hat{\mathcal{B}}}({\bm g}_{\bm u}))+\sum_{{\bm u}\in B^\ell\setminus W}{\bf wt}({\Phi}_{\hat{\mathcal{B}}}({\bm g}_{\bm u}))\notag\\
&=q^\ell\sum_{{\bm u}\in W}{\bf wt}(\sigma(\bm u))+|B^\ell\setminus W|\cdot \ell q^{\ell-1}(q-1)\notag\\
&=q^\ell{\bf wt}(\sigma(W))+\ell q^{\ell-1}(q-1)(q^\ell-q^{\ell-m})\label{eq14}\\
&\geq (\ell-m)q^{2\ell-m-1}(q-1)+\ell q^{\ell-1}(q-1)(q^\ell-q^{\ell-m})\;,\label{eq15}
\end{align}}

\noindent where (\ref{eq14}) is because $\sigma$ is a bijection due to ${\rm dim}_{B}(\{\mu_j\}_{j=1}^{\ell})=\ell$. Moreover, it is easy to see that $\sigma(W)$ is a $B$-linear space of dimension $\ell-m$. Then (\ref{eq15}) comes from Lemma \ref{wt}. By Theorem \ref{I/O}, we obtain
\begin{equation}\notag
  \gamma_{I/O} =\frac{{\bf wt}(L_0)}{q^{\ell-1}(q-1)}-\ell\geq \ell q^\ell-mq^{\ell-m}-\ell\geq \ell(q^\ell-1)-q^{\ell-1},
  \end{equation}
where the last inequality follows as $mq^{\ell-m}$ is maximum at $m=1$.
\end{proof}
Note when $q=2$, the lower bound obtained in Theorem \ref{r=2} coincides with the bound derived in \cite{fullr=2}. However, our computation is quite simple because we consider the weight of ${\bm g}_{\bm u}$ whose coordinates actually form the value table of the polynomial $g_{\bm u}(x)$. In a similar way, we continue to derive a lower bound on the I/O cost for full-length RS codes with three parities. Since in this case the dual codewords correspond to polynomials of degree at most $2$ and a polynomial of degree $2$ is a (affine) $q$-polynomial over fields of characteristic $2$, we restrict to $B=\mathbb{F}_2$ for easier computation.
\begin{theorem}\label{r=3}
For $RS\left(\mathbb{F}_{2^\ell},2^\ell-3\right)$ over $\mathbb{F}_{2^\ell}$, the I/O cost of any linear repair scheme satisfies:
      \begin{equation*}
        \gamma_{I/O}\geq (n-1)\ell-n-2^{\ell-3},
      \end{equation*}
where $n=2^\ell$.
\end{theorem}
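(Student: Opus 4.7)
The plan is to follow the strategy of Theorem \ref{r=2}: use Theorem \ref{I/O} to reduce the I/O cost at node $0$ (corresponding to $\alpha_1=0$) to computing ${\bf wt}(L_0)$, then evaluate this weight by partitioning over all linear combinations ${\bm g}_{\bm u}=\sum_j u_j {\bm g}^{(j)}$ according to the shape of the underlying polynomial. First I would set up: by duality $RS(F,2^\ell-3)^\bot=RS(F,3)$, so each dual codeword is the evaluation vector of $g_j(x)=\lambda_j x^2+\rho_j x+\mu_j\in F[x]$, and the node-$0$ condition ${\rm dim}_{\mathbb{F}_2}\{g_j(0)\}=\ell$ forces $\{\mu_j\}_j$ to form an $\mathbb{F}_2$-basis of $F$. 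Writing $\Lambda({\bm u})=\sum u_j\lambda_j$, $R({\bm u})=\sum u_j\rho_j$, $M({\bm u})=\sum u_j\mu_j$ and $\sigma=\Phi_{\hat{\mathcal{B}}}\circ M$ (an $\mathbb{F}_2$-bijection on $\mathbb{F}_2^\ell$), the critical use of ${\rm char}(F)=2$ is that the Frobenius $x\mapsto x^2$ is $\mathbb{F}_2$-linear, so $g_{\bm u}(x)=\Lambda({\bm u})x^2+R({\bm u})x+M({\bm u})$ is an affine $\mathbb{F}_2$-linear map $F\to F$.

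Next I would partition $\mathbb{F}_2^\ell$ by the shape of $g_{\bm u}$ and compute $w({\bm u})={\bf wt}(\Phi_{\hat{\mathcal{B}}}({\bm g}_{\bm u}))$ in each regime. If ${\bm u}\in W_0:=\ker\Lambda\cap\ker R$ (of dimension $t$), then $g_{\bm u}\equiv M({\bm u})$ and $w({\bm u})=2^\ell{\bf wt}(\sigma({\bm u}))$; if ${\bm u}\in\ker\Lambda\setminus W_0$ (degree $1$) or ${\bm u}\in\ker R\setminus\ker\Lambda$ (where $g_{\bm u}=\Lambda x^2+M$ is a bijection because $x\mapsto x^2$ is the Frobenius isomorphism of $F$), the coordinates of ${\bm g}_{\bm u}$ form a permutation of $F$ and $w({\bm u})=\ell\cdot 2^{\ell-1}$; finally, for ${\bm u}\notin\ker\Lambda\cup\ker R$ an Artin--Schreier argument identifies the image of $x\mapsto\Lambda x^2+Rx$ as the hyperplane $\ker{\rm Tr}^\eta$ with $\eta=\Lambda/R^2$, so $g_{\bm u}$ is $2$-to-$1$ onto $M({\bm u})+\ker{\rm Tr}^\eta$, and Lemma \ref{wt}(iii) yields $w({\bm u})=|{\rm supp}(U)|\cdot 2^{\ell-1}+|{\rm supp}(\sigma({\bm u}))\setminus{\rm supp}(U)|\cdot 2^\ell$ where $U=\Phi_{\hat{\mathcal{B}}}(\ker{\rm Tr}^\eta)$ is an $(\ell-1)$-dimensional hyperplane in $\mathbb{F}_2^\ell$, giving at least $(\ell-1)2^{\ell-1}$. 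Summing the four contributions, parameterized by $t$ and by $m={\rm dim}_{\mathbb{F}_2}\{\lambda_j\}$, $r={\rm dim}_{\mathbb{F}_2}\{\rho_j\}$, and using Lemma \ref{wt} to bound ${\bf wt}(\sigma(W_0))\geq t\cdot 2^{t-1}$ in the constant regime, should give a lower bound on ${\bf wt}(L_0)$ which via $\gamma_{I/O}={\bf wt}(L_0)/2^{\ell-1}-\ell$ (Theorem \ref{I/O}) yields the claimed $(n-1)\ell-n-2^{\ell-3}$ after minimizing over $(t,m,r)$.

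The hardest step is the accounting in the $2$-to-$1$ regime: $w({\bm u})$ there depends on the fine support structure of the hyperplane $U({\bm u})$ (whether $|{\rm supp}(U)|=\ell$ or $\ell-1$, the latter occurring exactly when $\eta=\Lambda/R^2=\beta^{(i)}$ for some $i\in[\ell]$) and on the alignment of $\sigma({\bm u})$ with $U$, not merely on whether $\Lambda,R$ vanish. Extracting the $-2^{\ell-3}$ slack out of this regime requires controlling how often, as ${\bm u}$ ranges over $\mathbb{F}_2^\ell\setminus(\ker\Lambda\cup\ker R)$, the quantity $|{\rm supp}(U({\bm u}))|$ drops to $\ell-1$ and how much support $\sigma({\bm u})$ gains outside $U({\bm u})$; this counting is where the characteristic-$2$ hypothesis is used crucially and is what distinguishes the proof from the cleaner two-parity case.
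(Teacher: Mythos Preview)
Your setup and case decomposition match the paper's almost exactly: both reduce via Theorem~\ref{I/O} to ${\bf wt}(L_0)$, write $g_{\bm u}(x)=\Lambda x^2+Rx+M$ (the paper uses $\eta$ for your $R$), and split according to whether $\Lambda,R$ vanish, obtaining $2^\ell{\bf wt}(\sigma({\bm u}))$ in the constant case, $\ell\,2^{\ell-1}$ in the bijection cases, and the Lemma~\ref{wt}(iii) formula in the $2$-to-$1$ case. The difficulty you flag is real, but your proposed resolution is not the one that works.

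First, tracking $m=\dim\{\lambda_j\}$ and $r=\dim\{\rho_j\}$ separately buys nothing: the adversary can take $\ker\Lambda=\ker R$ (so every non-constant $g_{\bm u}$ is $2$-to-$1$), eliminating the bijection regime entirely and collapsing the problem to the single parameter $t=\dim W_0$. With the crude bound $(\ell-1)2^{\ell-1}$ outside $W_0$ and ${\bf wt}(\sigma(W_0))\ge t\,2^{t-1}$ inside, one gets ${\bf wt}(L_0)\ge 2^{2\ell-1}(\ell-1)+2^{\ell+t-1}(t+1-\ell)$, whose minimum over $t$ is $-2^{2\ell-3}$ (attained at $t=\ell-2$ or $\ell-3$), yielding only $\gamma_{I/O}\ge (n-1)\ell-n-2^{\ell-2}$, off by $2^{\ell-3}$ from the claim.

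The paper closes this gap not by frequency-counting how often $|{\rm supp}(U({\bm u}))|$ drops, but by a single targeted observation: pick the unique ${\bm u}'$ with $\sigma({\bm u}')={\bf 1}$ and look at the coset $U_{(\lambda',\eta')}\ni{\bm u}'$. If $(\lambda',\eta')=(0,0)$ then $|{\rm supp}(\sigma(W_0))|=\ell$, so the constant-regime sum is exactly $\ell\,2^{\ell+t-1}$ rather than $t\,2^{\ell+t-1}$. Otherwise, regardless of whether $|{\rm supp}(V_{(\lambda',\eta')})|$ is $\ell$ or $\ell-1$, the fact that $\sigma(U_{(\lambda',\eta')})$ has full support $[\ell]$ forces $\sum_{{\bm u}\in U_{(\lambda',\eta')}}{\bf wt}(\Phi_{\hat{\mathcal{B}}}({\bm g}_{\bm u}))\ge \ell\,2^{\ell+t-1}$ via the second term in~(\ref{eq27}). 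Either way one coset contributes an extra $2^{\ell+t-1}$ beyond the crude estimate, shifting $(t+1-\ell)$ to $(t+2-\ell)$ in the minimization and producing the required $-2^{2\ell-4}$. Your ``how often $|{\rm supp}(U)|=\ell-1$'' direction is orthogonal to this argument and, as stated, does not by itself yield the bound.
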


The proof is presented in Appendix \ref{appendixA}  and a repair scheme with $\gamma_{I/O}=(n-1)\ell-n$ is given in Section \ref{Sec5}.

\section{Repairing full-length RS codes via affine $q$-polynomials}\label{Sec5}

In this section, we build a repair scheme for the full-length RS code $RS(F,k)$. Suppose $n=|F|=q^\ell$ and $n-k\geq q^s+1$ for some $0\leq s<\ell$. Denote $F=\{\alpha_1\!=\!0,\alpha_2,...,\alpha_n\}$. The repair scheme is defined by $\ell$ dual codewords $\{g_j(x)\}_{j=1}^\ell$ which are specially designed affine $q$-polynomials. Then we compute the I/O cost of the repair scheme with respect to the basis $\mathcal{B}=\{\beta^{(1)},...,\beta^{(\ell)}\}$. It turns out that when $n-k=2$ the repair scheme achieves the optimal I/O cost and when $n-k=3$ its I/O cost is lower than previous schemes.

First we recall some basics about $q$-polynomials. A $q$-polynomial over $F$ has the form $L(x)=\sum_{i=0}^m\lambda_ix^{q^i}$, where $\lambda_i\in F$. It is known each $q$-polynomial induces a $B$-linear operator on any extension field of $F$.
Hence, ${\rm Im}(L)$ is a subspace of $F$ and ${\rm dim}_B({\rm Im}(L))\geq \ell-m$. As an example, the trace function ${\rm Tr}(\cdot)$ of $F$ over $B$ is a $q$-polynomial. Denote $K={\rm Ker}({\rm Tr})$. We know ${\rm dim}_B(K)=\ell-1$. Suppose $L(x)$ is a $q$-polynomial over $F$ and $\alpha\in F$. Then the polynomial $L(x)+\alpha$ is called an affine $q$-polynomial. Next, we present a result which will be used to select the $q$-polynomials in the repair scheme.

\begin{theorem}\label{q-poly}
 Assume $1\leq t<\ell$ and $\beta_1,...,\beta_t\in F$ are linearly independent over $B$. Let $(\theta_0,\theta_1,...,\theta_t)$ with $\theta_t\neq 0$ be a solution to the system
\begin{equation}\label{cond}
        \left( {\begin{array}{cccc} \beta_1 & \beta_1^{q} &  \cdots &\beta_1^{q^t}\\
\beta_2 & \beta_2^{q} &  \cdots &\beta_2^{q^t}\\
\vdots & \vdots & \ddots & \vdots \\
 \beta_t & \beta_t^{q} &  \cdots &\beta_t^{q^t}\end {array}}\right) \left({\begin{array}{c}
      \theta_t  \\
      \theta_{t-1}^q\\
      \vdots\\
      \theta_0^{q^t}
 \end{array}}\right)={\bm 0}.
    \end{equation}
Define $L(x)=\sum_{j=0}^t \theta_j x^{q^{j}}$. Then  ${\rm Im}(L)=\bigcap_{i=1}^t \beta_i^{-1}K$.
\end{theorem}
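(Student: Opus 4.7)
The plan is to prove $\mathrm{Im}(L)\subseteq\bigcap_{i=1}^t\beta_i^{-1}K$ directly from condition (\ref{cond}), and then show both sides have the same $B$-dimension $\ell-t$, which forces equality.

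For the inclusion, note that $\mathrm{Im}(L)\subseteq\beta_i^{-1}K$ is by definition $\mathrm{Tr}(\beta_iL(\alpha))=0$ for every $\alpha\in F$. Using $B$-linearity of the trace together with the identity $\mathrm{Tr}(z)=\mathrm{Tr}(z^{q^m})$ and $x^{q^\ell}=x$, I would rewrite each summand as $\mathrm{Tr}(\beta_i\theta_j\alpha^{q^j})=\mathrm{Tr}((\beta_i\theta_j)^{q^{\ell-j}}\alpha)$, which collects to
$$\mathrm{Tr}(\beta_iL(\alpha))=\mathrm{Tr}\Bigl(\alpha\sum_{j=0}^t(\beta_i\theta_j)^{q^{\ell-j}}\Bigr).$$
By non-degeneracy of the trace pairing on $F$, this vanishes for every $\alpha$ if and only if $\sum_j(\beta_i\theta_j)^{q^{\ell-j}}=0$; raising to the $q^t$ power (a bijection on $F$) and reindexing $j\to t-j$ transforms this into $\sum_j\beta_i^{q^j}\theta_{t-j}^{q^j}=0$, which is exactly the $i$-th row of (\ref{cond}). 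Hence the hypothesis yields $\mathrm{Im}(L)\subseteq\bigcap_{i=1}^t\beta_i^{-1}K$.

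For the dimension count, the right-hand side is the kernel of the $B$-linear map $\alpha\mapsto(\mathrm{Tr}(\beta_i\alpha))_{i=1}^t$ from $F$ to $B^t$; since $\beta_1,\dots,\beta_t$ are $B$-linearly independent, non-degeneracy of the trace pairing makes this map surjective, so its kernel has $B$-dimension $\ell-t$. On the other hand, $L(x)$ has degree $q^t$ (because $\theta_t\neq 0$), so $|\mathrm{Ker}(L)|\leq q^t$; as $\mathrm{Ker}(L)$ is a $B$-subspace of $F$, this gives $\dim_B\mathrm{Ker}(L)\leq t$, hence $\dim_B\mathrm{Im}(L)\geq\ell-t$ by rank-nullity. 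Combining the inclusion with the matching dimensions forces $\mathrm{Im}(L)=\bigcap_{i=1}^t\beta_i^{-1}K$.

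The main obstacle is the Frobenius index bookkeeping in the translation step: one must carefully track how the identity $\mathrm{Tr}(y\,x^{q^j})=\mathrm{Tr}(y^{q^{\ell-j}}x)$ collapses $L(\alpha)$ into a single $\alpha$-linear expression, and how the resulting twists align with the exponents in the matrix of (\ref{cond}). Once this identification is in place, the remainder is a routine application of the non-degeneracy of the trace pairing and the degree bound on the kernel of a $q$-polynomial.
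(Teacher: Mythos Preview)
Your proof is correct and follows essentially the same two-step strategy as the paper: establish the inclusion $\mathrm{Im}(L)\subseteq\bigcap_i\beta_i^{-1}K$ from the system~(\ref{cond}), then match dimensions. The only noteworthy difference is in how the inclusion is argued: the paper expands $\mathrm{Tr}(\beta L(\alpha))=\sum_{k=0}^{\ell-1}\omega_k\alpha^{q^k}$ with $\omega_k=\sum_j(\beta\theta_j)^{q^{k\ominus j}}$ and shows all $\omega_k$ vanish via the recursion $\omega_{k\oplus1}=\omega_k^q$ together with $\omega_t=0$, whereas you use the trace-shift identity $\mathrm{Tr}(y\,x^{q^j})=\mathrm{Tr}(y^{q^{\ell-j}}x)$ to collapse everything to $\mathrm{Tr}(c\alpha)$ and invoke non-degeneracy. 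Your route is slightly slicker; the paper's is more explicit but amounts to the same computation. For the dimension of the intersection, the paper cites Lemma~\ref{lem12} while you give its one-line proof directly, which is fine.
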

We first show the solution $(\theta_0,...,\theta_t)\in F^{t+1}$ with $\theta_t\neq 0$ does exist. Clearly, the coefficient matrix on the left side of (\ref{cond}) is a $t\times(t+1)$ matrix of which the last $t$ columns are linearly independent. Thus $\theta_t=0$ always implies $\theta_i=0$ for all $i\in[0,t]$.  Since the system must have nonzero solutions and the map $x\mapsto x^{q^i}$ is an automorphism of $F$ over $B$, we have the solution $(\theta_0,...,\theta_t)$ with $\theta_t\neq 0$. The remaining proof of Theorem \ref{q-poly} is presented in Appendix \ref{appendixB}.

Moreover, we have following lemma from \cite{IOfulllength} to be used in later proof.
\begin{lemma}\label{lem12}(\cite{IOfulllength}) If $\{\beta_j\}_{j=1}^t\subseteq F$ are linearly independent over $B$, then $\mathrm{dim}_B\big(\bigcap_{j=1}^t\beta_{j}^{-1}K\big)=\ell-t$\;.
\end{lemma}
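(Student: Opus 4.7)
The plan is to rewrite the intersection as the kernel of a single $B$-linear map into $B^t$ and then invoke the non-degeneracy of the trace bilinear form. Concretely, for any $\beta \in F^*$ and $x \in F$, we have $x \in \beta^{-1}K$ if and only if $\beta x \in K$, i.e., $\mathrm{Tr}(\beta x) = 0$. Thus
\[
\bigcap_{j=1}^t \beta_j^{-1}K \;=\; \ker \Psi, \qquad \Psi : F \longrightarrow B^t, \quad \Psi(x) = \bigl(\mathrm{Tr}(\beta_1 x), \ldots, \mathrm{Tr}(\beta_t x)\bigr),
\]
which is a $B$-linear map since $\mathrm{Tr}$ is $B$-linear. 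By rank-nullity over $B$, the claim $\dim_B \ker \Psi = \ell - t$ reduces to showing $\Psi$ is surjective, i.e., $\mathrm{rank}_B(\Psi) = t$.

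Next I would prove surjectivity by contradiction. If $\mathrm{Im}(\Psi)$ lies in a proper subspace of $B^t$, then there exist $c_1,\ldots,c_t \in B$, not all zero, such that $\sum_{j=1}^t c_j \mathrm{Tr}(\beta_j x) = 0$ for every $x \in F$. Pulling the $c_j$'s inside the trace yields $\mathrm{Tr}\bigl((\textstyle\sum_j c_j \beta_j)\, x\bigr) = 0$ for all $x \in F$. Since the bilinear form $(\alpha, x) \mapsto \mathrm{Tr}(\alpha x)$ on $F$ over $B$ is non-degenerate (this is exactly the fact underpinning the existence of the dual basis $\hat{\mathcal{B}}$ used throughout the paper), we conclude $\sum_{j=1}^t c_j \beta_j = 0$, contradicting the $B$-linear independence of $\{\beta_j\}_{j=1}^t$. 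Hence $\Psi$ is surjective and $\dim_B \ker \Psi = \ell - t$.

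There is no real obstacle here; the only subtlety is recognizing that the definition $\beta_j^{-1}K = \{x : \beta_j x \in K\}$ exactly encodes the vanishing of the linear functional $x \mapsto \mathrm{Tr}(\beta_j x)$, after which the argument is a one-line application of non-degeneracy of the trace pairing. I would keep the write-up to these two short paragraphs.
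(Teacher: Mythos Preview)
Your argument is correct. Note, however, that the paper does not give its own proof of this lemma: it is quoted from \cite{IOfulllength} and stated without proof, so there is nothing in the paper to compare against. Your approach---identifying $\bigcap_{j=1}^t\beta_j^{-1}K$ with the kernel of the $B$-linear map $x\mapsto(\mathrm{Tr}(\beta_1 x),\ldots,\mathrm{Tr}(\beta_t x))$, showing this map is onto $B^t$ via non-degeneracy of the trace pairing, and finishing with rank--nullity---is the standard and cleanest way to establish the result, and it fits neatly with the paper's setup (the existence of the dual basis $\hat{\mathcal{B}}$ already presupposes exactly this non-degeneracy).
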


Now we are ready to present the repair scheme. As before, let $\hat{\mathcal{B}}=\{\gamma^{(1)},...,\gamma^{(\ell)}\}$ be the dual basis of $\mathcal{B}$.

\begin{construction}\label{cons1}
For $j\in[s+1]$, let $L_j(x)$ be a $q$-polynomial such that ${\rm Im}(L_j)=\bigcap_{i\in[s+1]\setminus\{j\}}(\beta^{(i)})^{-1}K$. Specifically, the coefficients of $L_j(x)$ can be chosen as a nonzero solution to a system similar to (\ref{cond}) replacing $t$ with $s$ and $\beta_1,...,\beta_t$ with the $s$ elements in $\{\beta^{(i)}:i\in[s+1],i\neq j\}$. Then define
$$ g_j(x)=\begin{cases}
L_j(x)+\gamma^{(j)},&j\in[s+1],\\
\gamma^{(j)},	&j\in[s+2,\ell].	\\
\end{cases}$$
\end{construction}

We first check the polynomials $\{g_j(x)\}_{j=1}^\ell$ define a repair scheme of node $\alpha_1=0$ for $RS(F,k)$. Since ${\rm deg}(g_j(x))=q^s<n-k$ for all $j\in[\ell]$, these polynomials correspond to codewords in the dual code. Moreover, $g_j(0)=\gamma^{(j)}$, so it obviously holds ${\rm dim}_B(\{g_j(0)\}_{j=1}^\ell)=\ell$. In the following theorem, we compute the I/O cost and bandwidth of the repair scheme. 
\begin{theorem}\label{gamma_I/O}
Let $n=q^{\ell}$ and $n-k\geq q^s+1$.  The I/O cost with respect to $\mathcal{B}$ of the repair scheme given in Construction \ref{cons1} for $RS(F,k)$ is
\begin{equation}\label{12}
    \gamma_{I/O}=(n-1)\ell-(s+1)\cdot q^{\ell-1}.
\end{equation}
Moreover, the repair bandwidth of this scheme equals its I/O cost.
\end{theorem}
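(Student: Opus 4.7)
The plan is to apply Theorem \ref{I/O} (or, equivalently, Lemma \ref{lem3}) at the failed node $\alpha_1 = 0$; full-length symmetry (\cite[Lemma~8]{IOfulllength}) makes any other choice equivalent. The computation then reduces to analyzing each I/O matrix $W_i$ defined in (\ref{W_j}). Since every $L_j$ is a $q$-polynomial, $L_j(0) = 0$, hence $g_j(0) = \gamma^{(j)}$ for all $j$, which gives $W_1 = I_\ell$. For $i \geq 2$, the defining property ${\rm Im}(L_j) \subseteq (\beta^{(t)})^{-1}K$ for every $t \in [s+1]\setminus\{j\}$ forces ${\rm Tr}(L_j(\alpha_i)\beta^{(t)}) = 0$ at those positions; together with the trivial traces at rows $j \in [s+2,\ell]$ this would produce the block-triangular form
\[
W_i = \begin{pmatrix} D_i & M_i \\ 0 & I_{\ell - s - 1} \end{pmatrix},
\]
where $D_i = {\rm diag}(d_{11}(i),\ldots,d_{s+1,s+1}(i))$ with $d_{tt}(i) = {\rm Tr}(L_t(\alpha_i)\beta^{(t)}) + 1$. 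In particular ${\rm nz}(W_i) = (\ell - s - 1) + |\{t \in [s+1] : d_{tt}(i) \neq 0\}|$.

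Next I would count the zeros of the diagonal. For each fixed $t \in [s+1]$, $f_t(\alpha) \triangleq {\rm Tr}(L_t(\alpha)\beta^{(t)})$ is a $B$-linear functional $F \to B$. The crucial claim is that $f_t$ is not identically zero: otherwise ${\rm Im}(L_t)$ would lie in $\bigcap_{k \in [s+1]}(\beta^{(k)})^{-1}K$, a subspace of dimension $\ell - s - 1$ by Lemma \ref{lem12}, contradicting $\dim_B{\rm Im}(L_t) = \ell - s$ guaranteed by Theorem \ref{q-poly}. Hence $f_t$ is surjective, so each value of $B$ is attained on exactly $q^{\ell-1}$ elements of $F$; since $f_t(0) = 0 \neq -1$, all $q^{\ell-1}$ preimages of $-1$ lie in $F \setminus \{0\}$, giving exactly $q^{\ell-1}$ indices $i \geq 2$ with $d_{tt}(i) = 0$. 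Summing would yield
\[
\gamma_{I/O} = \sum_{i=2}^{n} {\rm nz}(W_i) = (n-1)(\ell - s - 1) + (s+1)\bigl((n-1) - q^{\ell-1}\bigr) = (n-1)\ell - (s+1)q^{\ell-1}.
\]

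For the bandwidth equality, I would exploit the same block-triangular shape of $W_i$: because the lower-right block $I_{\ell - s - 1}$ is invertible, ${\rm rank}(W_i) = {\rm rank}(D_i) + (\ell - s - 1) = {\rm nz}(W_i)$; the identity is obvious at $i = 1$. Then Lemma \ref{lem1} gives $b = \sum_{i \geq 2}{\rm rank}(W_i) = \gamma_{I/O}$. The main obstacle is extracting the clean block form of $W_i$ and ruling out $f_t \equiv 0$; both ultimately rest on the precise image description $\dim_B{\rm Im}(L_t) = \ell - s$ from Theorem \ref{q-poly} combined with Lemma \ref{lem12}.
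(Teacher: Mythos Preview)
Your proposal is correct and follows essentially the same route as the paper: both derive the block-triangular shape of $W_i$ from the image condition ${\rm Im}(L_j)=\bigcap_{t\neq j}(\beta^{(t)})^{-1}K$, then count the diagonal zeros by showing that $\alpha\mapsto{\rm Tr}(L_t(\alpha)\beta^{(t)})$ is a nonzero $B$-linear functional (the paper does this via Lemma~\ref{wt} applied to $\Phi_{\hat{\mathcal{B}}}({\rm Im}(L_t))$ together with $|{\rm Ker}(L_t)|=q^s$, while you argue directly from surjectivity), and finally read off ${\rm rank}(W_i)={\rm nz}(W_i)$ from the block form. Your dimension argument ruling out $f_t\equiv 0$ is in fact slightly cleaner than the paper's justification at that step.
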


\begin{proof}
According to Lemma \ref{lem3}, $\gamma_{I/O}=\sum_{i=2}^n\mathrm{nz}(W_i)$, where 
\begin{equation}\label{W_i}
	W_i=\left( {\setlength{\arraycolsep}{1pt}\begin{array}{cccc}\mathrm{Tr}(g_1(\alpha_i)\beta^{(1)}) & \mathrm{Tr}(g_1(\alpha_i)\beta^{(2)}) &  \cdots &\mathrm{Tr}(g_1(\alpha_i)\beta^{(\ell)})\\
			\mathrm{Tr}(g_2(\alpha_i)\beta^{(1)}) & \mathrm{Tr}(g_2(\alpha_i)\beta^{(2)}) &  \cdots &\mathrm{Tr}(g_2(\alpha_i)\beta^{(\ell)})\\
			\vdots & \vdots & \ddots & \vdots \\
			\mathrm{Tr}(g_\ell(\alpha_i)\beta^{(1)}) & \mathrm{Tr}(g_\ell(\alpha_i)\beta^{(2)}) &  \cdots &\mathrm{Tr}(g_\ell(\alpha_i)\beta^{(\ell)})\end {array}}\right).
	\end{equation}
	 To simplify the calculation, in the following we compute $\sum_{i\in[n]}{\rm nz}(W_i)$ instead of $\sum_{i=2}^n{\rm nz}(W_i)$. Due to the definition of $\{g_j(x)\}_{j=1}^\ell$ in Construction \ref{cons1}, we have for $j\in[s+1]$,
	 \begin{align}{\rm Tr}(g_j(\alpha_i)\beta^{(t)})&={\rm Tr}(L_j(\alpha_i)\beta^{(t)})+{\rm Tr}(\gamma^{(j)}\beta^{(t)})\notag\\
	 	&=\begin{cases}
	 		{\rm Tr}(L_j(\alpha_i)\beta^{(j)})+1,&t=j\\
	 		0,&t\in[s+1]\setminus\{j\}\\
	 		{\rm Tr}(L_j(\alpha_i)\beta^{(t)}),&t\in[s+2,\ell]
	 	\end{cases},\label{eq18}\end{align}
	 while for $j\in[s+2,\ell]$,
	 \begin{align}{\rm Tr}(g_j(\alpha_i)\beta^{(t)})&={\rm Tr}(\gamma^{(j)}\beta^{(t)})\notag\\
	 	&=\begin{cases}
	 		1,&~~t=j\\
	 		0,&~~t\in[\ell]\setminus\{j\}
	 	\end{cases}\;.\label{eq19}\end{align}
	 Note both in (\ref{eq18}) and (\ref{eq19}) we use the fact that $\{\gamma^{(1)},...,\gamma^{(\ell)}\}$ is the dual basis of $\mathcal{B}$. Moreover, in the second case of (\ref{eq18}) we also use the condition ${\rm Im}(L_j)=\bigcap_{i\in[s+1]\setminus\{j\}}(\beta^{(i)})^{-1}K$. In summary, $W_i$ for $i\in[n]$ has the following form
	 \begin{equation}\label{eq20}
	 	W_i=\left( {\begin{array}{ccc|ccc}
	 			\omega_{i,1} &\cdots &0 & * &\cdots &*\\
	 			\vdots &\ddots &\vdots & \vdots &\ddots & \vdots \\
	 			0 &\cdots &\omega_{i,s+1} & * &\cdots &*\\ \hline
	 			0 &\cdots &0 & 1 &\cdots &0\\
	 			\vdots &\ddots &\vdots & \vdots &\ddots & \vdots \\
	 			0 &\cdots &0 & 0 &\cdots &1\\
	 			\end {array}}\right),
	 	\end{equation}where $\omega_{i,j}={\rm Tr}(L_j(\alpha_i)\beta^{(j)})+1$ for $j\in[s+1]$. Note the upper left corner of $W_i$ is a $(s+1)\times (s+1)$ diagonal matrix, the lower left corner is an all-zero matrix, and the lower right corner is an identity matrix. It immediately follows ${\rm rank}(W_i)={\rm nz}(W_i)$, so the repair bandwidth and I/O cost are equal.
	 	Denote the $(s+1)\times (s+1)$ diagonal matrix in the upper left corner of $W_i$ by $W_i'$. It has ${\rm nz}(W_i)={\rm nz}(W_i')+\ell-s-1$. Then we compute
	 	\begin{align}\notag
	 		\sum_{i\in[n]}{\rm nz}(W_i')&=\sum_{i\in[n]}\big(s+1-|\{j\in[s+1]:\omega_{i,j}=0\}|\big)\\
	 		&\label{r3}=n(s+1)-\sum_{j\in[s+1]}|\{i\in[n]:\omega_{i,j}=0\}|\\&\notag=n(s+1)\\&\notag\quad-\sum_{j\in[s+1]}|\{i\in[n]:{\rm Tr}(L_j(\alpha_i)\beta^{(j)})=-1\}|\;,
	 	\end{align}
	 	where (\ref{r3}) uses double counting.
	 	
	 	Since ${\rm Im}(L_j)=\bigcap_{i\in[s+1]\setminus\{j\}}(\beta^{(i)})^{-1}K$, by Lemma \ref{lem12} it has ${\rm dim}_B({\rm Im}(L_j))=\ell-s$, and thus ${\rm dim}_B({\rm Ker}(L_j))=s$. Also,
	 	$\Phi_{\hat{\mathcal{B}}}({\rm Im}(L_j))$ is an $(\ell-s)$-dimensional subspace of $B^\ell$. Moreover, because $L_j(\alpha_1)=\gamma^{(j)}$ and $\Phi_{\hat{\mathcal{B}}}(\gamma^{(j)})={\bm e}_j$, we know $j\in{\rm supp}\big(\Phi_{\hat{\mathcal{B}}}({\rm Im}(L_j))\big)$. Then by Lemma \ref{wt}, there are $q^{\ell-s-1}$ $\alpha$s in ${\rm Im}(L_j)$ such that the $j$-th position of $\Phi_{\hat{\mathcal{B}}}(\alpha)$ is $-1$, i.e., ${\rm Tr}(\alpha\beta^{(j)})=-1$. Moreover, for each such $\alpha$, $L_j^{-1}(\alpha)=q^s$. Therefore,
	 	$$|\{i\in[n]:{\rm Tr}(L_j(\alpha_i)\beta^{(j)})=-1\}|=q^s\cdot q^{\ell-s-1}=q^{\ell-1}\;,$$
	 	and thus $\sum_{i\in[n]}{\rm nz}(W_i')=(s+1)(n-q^{\ell-1})$. Finally, it follows
	 	\begin{align*}
	 		\sum_{i\in[n]}{\rm nz}(W_i)&=\sum_{i\in[n]}({\rm nz}(W_i')+\ell-s-1)\\
	 		&=\ell q^{\ell}-(s+1)q^{\ell-1}\;.
	 	\end{align*}
	 	By the definition in (\ref{W_i}), we know $W_1$ is the identity matrix $I_{\ell}$, so $\gamma_{I/O}=\sum_{i\in[n]}{\rm nz}(W_i)-\ell$ which completes the proof.
\end{proof}
Although the formula derived in Theorem \ref{I/O} was not used in the above calculation of the I/O cost, it actually inspires the determination of the affine $q$-polynomials in Construction \ref{cons1} for lower I/O cost.

\begin{remark}
The repair schemes given in \cite{RSrepair} and \cite{obRS} for the full-length RS code $RS(F,k)$ with $n-k\geq q^s$ reach the optimal repair bandwidth $(n-1)(\ell-s)$. However, Dau et al. \cite{IOfulllength} showed the schemes incur an I/O cost of $(n-q^s)\ell\geq k\ell$, while $k\ell$ means the I/O cost of the trivial repair approach (i.e., by connecting to $k$ nodes and downloading all data stored on the nodes). In contrast, our schemes incur an I/O cost (also repair bandwidth) of $(n-1)\ell-(s+1)q^{\ell-1}$ which is lower than $k\ell$ when $n\!-\!k\leq \frac{(s+1)q^{\ell-1}}{\ell}$. Suppose $q^s\!<\! n\!-\!k\!\leq\! q^{s+1}$. Then the condition $n\!-\!k\leq \frac{(s+1)q^{\ell-1}}{\ell}$ always holds for $\ell\geq s+2+\log_q\frac{\ell}{s+1}$ which is true in most cases. Moreover, when $q=2$ and $\ell\leq 2^{\ell-s}$, the sum of repair bandwidth and I/O cost of the schemes proposed in \cite{RSrepair} and \cite{obRS} is higher than that of our scheme, i.e.,
\begin{align*}
&\big((n-1)(\ell-s)+(n-2^s)\ell\big)-2\big((n-1)\ell-(s+1)2^{\ell-1}\big)\\
&=\ell+s+(2^{\ell-s}-\ell)2^s
\end{align*}
which means our scheme achieves a more significant reduction in I/O cost with a small bandwidth sacrifice.

\end{remark}

\begin{corollary}
Set $n-k=2$ and $s=0$. Then Construction \ref{cons1}  provides a repair scheme for $RS(F,k)$ with the I/O cost  $(n-1)\ell-q^{\ell-1}$ which is optimal with respect to the bound proved in Theorem \ref{r=2}.
\end{corollary}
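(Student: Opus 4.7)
The plan is to verify this corollary by direct substitution of $s=0$ into the general formulas already proved. The first step is to check that Construction \ref{cons1} applies: since $n-k=2=q^0+1$, the hypothesis $n-k\geq q^s+1$ is satisfied with $s=0$. In this degenerate case the index set $[s+1]\setminus\{j\}$ is empty for $j=1$, so the system (\ref{cond}) reduces to the empty system and we may take $L_1(x)=\theta_0 x$ for any $\theta_0\in F^\ast$, giving $\mathrm{Im}(L_1)=F$ (consistent with an empty intersection). Thus $g_1(x)=\theta_0 x+\gamma^{(1)}$ and $g_j(x)=\gamma^{(j)}$ for $j\in[2,\ell]$, which are clearly polynomials of degree at most $1=q^s\leq n-k-1$, so they lie in $RS(F,k)^\perp=RS(F,n-k)=RS(F,2)$.

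The second step is to invoke Theorem \ref{gamma_I/O} with $s=0$. This directly yields
\[
\gamma_{I/O}=(n-1)\ell-(s+1)\cdot q^{\ell-1}=(n-1)\ell-q^{\ell-1},
\]
matching the claimed I/O cost of the corollary.

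The third step is to compare with the lower bound. Theorem \ref{r=2} asserts that any linear repair scheme for $RS(F,q^\ell-2)$ satisfies $\gamma_{I/O}\geq (n-1)\ell-q^{\ell-1}$. Since the construction above achieves exactly this value, it is optimal, completing the proof.

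There is essentially no technical obstacle here: the corollary is a bookkeeping consequence of Theorem \ref{gamma_I/O} (upper bound via the explicit scheme) and Theorem \ref{r=2} (matching lower bound). The only minor subtlety worth flagging is that $s=0$ corresponds to a boundary case of Construction \ref{cons1} where the affine $q$-polynomial reduces to an affine linear map $\theta_0 x+\gamma^{(1)}$; one should briefly confirm that the image condition and the proof of Theorem \ref{gamma_I/O} go through in this edge case, but inspection shows that every step (in particular the evaluation of $|\{i\in[n]:\mathrm{Tr}(L_1(\alpha_i)\beta^{(1)})=-1\}|=q^{\ell-1}$) remains valid when $s=0$.
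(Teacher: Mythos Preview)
Your proposal is correct and matches the paper's intent: the corollary is stated without proof precisely because it follows by direct substitution of $s=0$ into Theorem~\ref{gamma_I/O} together with the lower bound of Theorem~\ref{r=2}. Your additional care in checking the $s=0$ boundary case of Construction~\ref{cons1} (the empty intersection giving $\mathrm{Im}(L_1)=F$ and the trace count $q^{\ell-1}$) is a welcome sanity check that the paper leaves implicit.
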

Moreover, our scheme for three parities only takes $2^{\ell-3}$ bits away from the lower bound in Theorem \ref{r=3}:
\begin{corollary}
Set $q=2$, $n-k=3$ and $s=1$. Then Construction \ref{cons1}  provides a repair scheme for $RS(F,k)$ with the I/O cost  $(n-1)\ell-n$.
\end{corollary}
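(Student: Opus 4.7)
The plan is to invoke Theorem \ref{gamma_I/O} directly with the specific parameter choices $q=2$ and $s=1$. First I would verify that the hypotheses of that theorem are satisfied: it requires $n-k \geq q^s+1$, which under $q=2$ and $s=1$ reads $n-k \geq 3$. Since the corollary sets $n-k=3$, the hypothesis holds with equality, so Construction \ref{cons1} yields a well-defined linear repair scheme for the full-length code $RS(F,k)$.

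Next I would substitute into the closed-form I/O cost guaranteed by Theorem \ref{gamma_I/O}:
\[
\gamma_{I/O} = (n-1)\ell - (s+1)\cdot q^{\ell-1}.
\]
Plugging in $s+1 = 2$ and $q^{\ell-1} = 2^{\ell-1}$ turns the subtracted term into $2\cdot 2^{\ell-1} = 2^{\ell}$. Since $n = q^{\ell} = 2^{\ell}$, this simplifies to $\gamma_{I/O} = (n-1)\ell - n$, which is exactly the quantity claimed.

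There is essentially no obstacle here, since the work of constructing the affine $q$-polynomials, verifying that the $\{g_j(x)\}_{j=1}^\ell$ define a valid repair scheme, and evaluating the resulting I/O cost has all been done inside Theorem \ref{gamma_I/O}; the corollary is a specialization of its formula. The only thing I would take care to mention for context is the comparison against the lower bound from Theorem \ref{r=3}, which asserts $\gamma_{I/O} \geq (n-1)\ell - n - 2^{\ell-3}$; together with the corollary this shows the construction is within $2^{\ell-3}$ of optimal, matching the gap advertised in the introduction and the remark preceding the statement.
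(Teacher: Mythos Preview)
Your proposal is correct and matches the paper's approach: the corollary is stated in the paper without proof because it is an immediate specialization of Theorem~\ref{gamma_I/O}, exactly as you describe. Your check of the hypothesis $n-k\geq q^s+1$ and the arithmetic $(s+1)q^{\ell-1}=2\cdot 2^{\ell-1}=2^\ell=n$ are all that is needed.
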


\section{Conclusion}\label{Sec6}
In this paper, we provide a general formula for computing the I/O cost of linear repair schemes. It helps us establish lower bounds on the I/O cost of repair schemes for full-length RS codes and construct repair schemes with lower I/O cost. However, we only discuss full-length RS codes with two and three parities in this work. In the future work, we expect to derive lower bounds on the I/O cost and provide linear repair schemes for general RS Codes using the formula.
\appendices
\section{ proof of Theorem \ref{r=3}}\label{appendixA}
\begin{proof}
	Let $\{{\bm g}^{(j)}\}_{j=1}^\ell$ be the $\ell$ dual codewords that correspond to the repair scheme of node 0. Since $RS(\mathbb{F}_{2^\ell},2^\ell-3)^\bot=RS(\mathbb{F}_{2^\ell},3)$, we have ${\bm g}^{(j)}\!=\!(g_j(\alpha_1),...,g_j(\alpha_n))$, where $g_j(x)=\lambda_jx^2+\eta_jx+\mu_j$ for some $\lambda_j,\eta_j,\mu_j\in \mathbb{F}_{2^\ell}$. Let $g_{\bm u}(x), {\bm g}_{\bm u}$ and $\sigma$ be defined as in Theorem 6. 
	Then we proceed to compute ${\bf wt}(L_0)$, where $L_0=\{{\Phi}_{\hat{\mathcal{B}}}({\bm g}_{\bm u}):{\bm u}\in \mathbb{F}_2^\ell\}$.
	
	Denote $W\!=\!\{(\sum_{j=1}^\ell u_j\lambda_j,\sum_{j=1}^{\ell}u_j\eta_j): {\bm u}\in \mathbb{F}_2^\ell\}$. For any $(\lambda,\eta)\in W$, define
	\begin{equation*}
		U_{(\lambda,\eta)}=\{{\bm u}\in \mathbb{F}_2^{\ell}:
		\sum_{j=1}^{\ell} u_j\lambda_j=\lambda, \sum_{j=1}^{\ell}u_j\eta_j=\eta\}\;.
	\end{equation*}
	Obviously, $U_{(0,0)}$ is a linear subspace of $\mathbb{F}_2^\ell$ and $U_{(\lambda,\eta)}$ is a coset of $U_{(0,0)}$ for any $(\lambda,\eta)\in W$.
	Assume ${\rm dim}(U_{(0,0)})=t$. Then, ${\rm dim}_{\mathbb{F}_2}(W)\!=\!\ell-t$.
	Since $\{U_{(\lambda,\eta)}\}_{(\lambda,\eta)\in W}$ forms a partition of $\mathbb{F}_2^\ell$, it follows $${\bf wt}(L_0)=\!\!\!\sum_{(\lambda,\eta)\in W}\sum_{{\bm u}\in U_{(\lambda,\eta)}}\!\!{\bf wt}({\Phi}_{\hat{\mathcal{B}}}({\bm g}_{\bm u}))\;.$$
	
	Next we  explore ${\bf wt}({\Phi}_{\hat{\mathcal{B}}}({\bm g}_{\bm u}))$ for any ${\bm u}\in U_{(\lambda,\eta)}$. It can be seen
	\begin{equation}{\bf wt}({\Phi}_{\hat{\mathcal{B}}}({\bm g}_{\bm u}))=\sum_{\alpha\in\mathbb{F}_2^\ell}{\bf wt}(\Phi_{\hat{\mathcal{B}}}(\lambda\alpha^2+\eta\alpha)+\sigma({\bm u}))\;.\label{eq25}\end{equation}  There are three cases:
	\begin{itemize}
		\item $\lambda=\eta=0$. Then from (\ref{eq25}) it has ${\bf wt}({\Phi}_{\hat{\mathcal{B}}}({\bm g}_{\bm u}))=2^\ell{\bf wt}(\sigma({\bm u}))$. Furthermore, we have
		\fontsize{9.5pt}{1pt}
		{\begin{equation}\sum_{{\bm u}\in U_{(0,0)}}\!\!\!\!\!{\bf wt}({\Phi}_{\hat{\mathcal{B}}}({\bm g}_{\bm u}))=2^\ell\!\!\!\!\!\sum_{{\bm u}\in U_{(0,0)}}\!\!\!\!\!{\bf wt}(\sigma({\bm u}))=2^\ell{\bf wt}(\sigma(U_{(0,0)})).\label{eq26}\end{equation}}
		
		\item Only one of $\lambda$ and $\eta$ is $0$. Then $\lambda x^2+\eta x$ is a bijection from $\mathbb{F}_{2^\ell}$ to $\mathbb{F}_{2^\ell}$. Then from (\ref{eq25}) it has ${\bf wt}({\Phi}_{\hat{\mathcal{B}}}({\bm g}_{\bm u}))={\bf wt}(\mathbb{F}_2^\ell)=2^{\ell-1}\ell$;
		\item $\lambda\neq 0$ and $\eta\neq 0$.  Then $\lambda x^2+\eta x$ is a $q$-polynomial over $\mathbb{F}_{2^\ell}$ with the kernel $\{0,-\lambda^{-1}\eta\}$. Thus for any $\alpha\in\mathbb{F}_{2^\ell}$, both $\alpha$ and $\alpha-\lambda^{-1}\eta$ map to the same value under this polynomial. For simplicity, denote $V_{(\lambda,\eta)}=\{\Phi_{\hat{\mathcal{B}}}(\lambda\alpha^2+\eta\alpha):\alpha\in \mathbb{F}_{2^\ell}\}$. It follows $V_{(\lambda,\eta)}$ is an $(\ell-1)$-dimensional subspace of $\mathbb{F}_2^\ell$.  Then combining with (\ref{eq25}) and Lemma 4, we have
		\begin{align}{\bf wt}({\Phi}_{\hat{\mathcal{B}}}({\bm g}_{\bm u}))&=2{\bf wt}(V_{(\lambda,\eta)}+\sigma({\bm u}))\notag\\&\notag=2^{\ell-1}|{\rm supp}(V_{(\lambda,\eta)})|\\&\quad+2^\ell|{\rm supp}(\sigma({\bm u}))\setminus{\rm supp}(V_{(\lambda,\eta)})|\label{eq27}\\
			&\geq 2^{\ell-1}(\ell-1)\;,\label{eq28}\end{align}where the inequality (\ref{eq28}) comes from $|{\rm supp}(V_{(\lambda,\eta)})|\geq {\rm dim}(V_{(\lambda,\eta)})=\ell-1$ and $|{\rm supp}(\sigma({\bm u}))\setminus{\rm supp}(V_{(\lambda,\eta)})|\geq0$.
	\end{itemize}
	
	Actually, it is easy to check that ${\bf wt}({\Phi}_{\hat{\mathcal{B}}}({\bm g}_{\bm u}))$ in the first two cases also satisfy the equality (\ref{eq27}). To derive a tighter lower bound on ${\bf wt}(L_0)$, one should give more precise estimate on the ${\bf wt}({\Phi}_{\hat{\mathcal{B}}}({\bm g}_{\bm u}))$ that may be higher than the rough bound in (\ref{eq28}). To this end, we specially choose ${\bm u}'\in \mathbb{F}_2^\ell$ such that $\sigma({\bm u}')\!=\!{\bm 1}$. Note such a ${\bm u}'$ definitely exists because $\sigma$ is a bijection from $\mathbb{F}_2^\ell$ to $\mathbb{F}_2^\ell$. Suppose ${\bm u}'\in U_{(\lambda',\eta')}$ for some $(\lambda',\eta')\in W$, then $|{\rm supp}(\sigma(U_{(\lambda',\eta')})|=\ell$. Next we estimate ${\bf wt}(L_0)$ in two cases.
	
	\begin{enumerate}
		\item If $(\lambda',\eta')=(0,0)$, then
		\begin{align}
			{\bf wt}(L_0)&=\sum_{{\bm u}\in U_{(0,0)}}{\bf wt}({\Phi}_{\hat{\mathcal{B}}}({\bm g}_{\bm u}))+\!\!\!\sum_{{\bm u}\in \mathbb{F}_2^\ell\setminus U_{(0,0)}}\!\!\!{\bf wt}({\Phi}_{\hat{\mathcal{B}}}({\bm g}_{\bm u}))\notag\\
			&\geq2^\ell{\bf wt}(\sigma(U_{(0,0)}))+|\mathbb{F}_2^\ell\setminus U_{(0,0)}|\cdot 2^{\ell-1}(\ell-1)\label{eq29}\\
			&=2^\ell\cdot\ell\cdot 2^{t-1}+(2^\ell-2^t)2^{\ell-1}(\ell-1)\label{eq30}\\
			&=2^{2\ell-1}(\ell-1)+2^{\ell+t-1}\notag\\
			&\geq 2^{2\ell-1}(\ell-1)+2^{\ell-1}\label{eq31}
		\end{align}
		where (\ref{eq29}) is from (\ref{eq26}) and (\ref{eq28}), and the equality (\ref{eq30}) is based on Lemma 4 combining with the facts $|{\rm supp}(\sigma(U_{(0,0)}))|\!=\!\ell$ and ${\dim}(U_{(0,0)})\!=\!t$. Note ${\bm u}'\in U_{(0,0)}$ in this case, so $|{\rm supp}(\sigma(U_{(0,0)}))|\!=\!\ell$. The inequality (\ref{eq31}) holds because $t\geq 0$ \footnote{Set $\lambda_i=\eta_i$ for $i\in[\ell]$ and $\lambda_1,...,\lambda_\ell$ be linearly independent over $\mathbb{F}_2$. Then it has ${\dim}(U_{(0,0)})=t=0$. In a similar setting, one can see that $t$ can range from $0$ to $\ell-1$.}.
		
		\item If $(\lambda',\eta')\neq(0,0)$, then \begin{align}{\bf wt}(L_0)&\notag=\sum_{{\bm u}\in U_{(0,0)}}{\bf wt}({\Phi}_{\hat{\mathcal{B}}}({\bm g}_{\bm u}))+\!\!\!\sum_{{\bm u}\in U_{(\lambda',\eta')}}\!\!\!{\bf wt}({\Phi}_{\hat{\mathcal{B}}}({\bm g}_{\bm u}))\\
			&\quad+\!\!\!\sum_{{\bm u}\not\in U_{(0,0)}\cup U_{(\lambda',\eta')}}\!\!\!{\bf wt}({\Phi}_{\hat{\mathcal{B}}}({\bm g}_{\bm u}))\notag\\
			&\geq t2^{\ell+t-1}+\!\!\!\sum_{{\bm u}\in U_{(\lambda',\eta')}}\!\!\!{\bf wt}({\Phi}_{\hat{\mathcal{B}}}({\bm g}_{\bm u}))\notag\\
			&\quad+(2^\ell-2^{t+1})2^{\ell-1}(\ell-1)\label{eq32}\end{align}
		where (\ref{eq32}) follows from the same deduction as that used in (\ref{eq30}) except that we only have $|{\rm supp}(U_{(0,0)})|\geq {\rm dim}(U_{(0,0)})\!=\!t$ in this case. So we are left to estimate the second item in (\ref{eq32}). Recall $|{\rm supp}(V_{(\lambda',\eta')})|\geq {\rm dim}(V_{(\lambda',\eta')})\!\geq\!\ell-1$ in this case. Then we have the following discussions.
		\begin{itemize}
			\item If $|{\rm supp}(V_{(\lambda',\eta')})|=\ell$, then for any ${\bm u}\in U_{(\lambda',\eta')}$ we have
			${\bf wt}({\Phi}_{\hat{\mathcal{B}}}({\bm g}_{\bm u}))=2^{\ell-1}\ell$ according to (\ref{eq27}). As a result, $\sum_{{\bm u}\in U_{(\lambda',\eta')}}{\bf wt}({\Phi}_{\hat{\mathcal{B}}}({\bm g}_{\bm u}))=2^{\ell+t-1}\ell$.
			\item If $|{\rm supp}(V_{(\lambda',\eta')})|=\ell-1$, then there exists $i^*\in[\ell]$ such that $i^*\in {\rm supp}(\sigma(U_{(\lambda',\eta')}))\setminus {\rm supp}(V_{(\lambda',\eta')})$ due to $|{\rm supp}(\sigma(U_{(\lambda',\eta')})|=\ell$. Since $\sigma(U_{(\lambda',\eta')})$ is a coset of $\sigma(U_{(0,0)})$, by Lemma 4 we know there are at least $2^{t-1}$ elements in $\sigma(U_{(\lambda',\eta')})$ that contain $i^*$ in the support set. So by (\ref{eq27}) it has
			\begin{align*}
				\sum_{{\bm u}\in U_{(\lambda',\eta')}}\!\!\!\!{\bf wt}({\Phi}_{\hat{\mathcal{B}}}({\bm g}_{\bm u}))&\geq 2^t\cdot2^{\ell-1}(\ell-1)+2^{t-1}\cdot 2^{\ell}\\&=2^{\ell+t-1}\ell\;.           \end{align*}
		\end{itemize}
		That is, we always have $\sum_{{\bm u}\in U_{(\lambda',\eta')}}{\bf wt}({\Phi}_{\hat{\mathcal{B}}}({\bm g}_{\bm u}))\!\!\!\geq 2^{\ell+t-1}\ell$ in this case. Then combining with (\ref{eq32}) it follows
		\begin{align}
			{\bf wt}(L_0)&\geq t2^{\ell+t-1}+2^{\ell+t-1}\ell+(2^\ell-2^{t+1})2^{\ell-1}(\ell-1)\notag\\
			&=2^{2\ell-1}(\ell-1)+2^{\ell+t-1}(t+2-\ell)\notag\\
			&\geq 2^{2\ell-1}(\ell-1)-2^{2\ell-4}\label{eq33}
		\end{align}
		where the inequality (\ref{eq33}) is due to the fact that $2^{\ell+t-1}(t+2-\ell)$ reaches the minimum at $t=\ell-3$ as $t$ ranges from $0$ to $\ell-1$. Note $\ell\geq 3$ always holds for nontrivial $RS(\mathbb{F}_{2^\ell}, 2^\ell-3)$.
	\end{enumerate}
	
	Finally, combining (\ref{eq31}) and (\ref{eq33}), we conclude ${\bf wt}(L_0)\geq 2^{2\ell-1}(\ell-1)-2^{2\ell-4}$ and then the lower bound on $\gamma_{I/O}$ follows from Theorem 5.
\end{proof}
\section{proof of Theorem \ref{q-poly}}\label{appendixB}
\begin{proof}
	We first prove a related claim: for any $\beta\in F$ and $\beta\neq 0$, if
	$\sum_{j=0}^t(\beta\theta_j)^{q^{t-j}}=0$, then ${\rm Im}(L)\subseteq\beta^{-1}K$.
	Clearly, it suffices to prove ${\rm Tr}(\beta L(\alpha))=0$ for any $\alpha\in F$. It can be seen
	\begin{align}\notag
		{\rm Tr}(\beta L(\alpha))&=\sum_{i=0}^{\ell-1}(\beta L(\alpha))^{q^i}\\
		&=\sum_{j=0}^t\sum_{i=0}^{\ell-1}(\beta\theta_j)^{q^i}\alpha^{q^{i\oplus j}}\notag\\
		&=\sum_{j=0}^t\Big(\sum_{k=0}^{\ell-1}(\beta\theta_j)^{q^{k\ominus j}}\alpha^{q^k}\Big)\label{eq266}\\
		&=\sum_{k=0}^{\ell-1}\Big(\sum_{j=0}^t(\beta\theta_j)^{q^{k\ominus j}}\Big)\alpha^{q^k},\notag
	\end{align}
	where $i\oplus j$ denotes $(i+j){\rm ~mod~}\ell$, and similarly, $k\ominus j$ denotes $(k-j){\rm ~mod~}\ell$. We use module $\ell$ here because $\alpha^{q^\ell}=\alpha$ for all $\alpha\in F$. Moreover, the equality (\ref{eq266}) follows by replacing $i\oplus j$ with $k$. It can be seen that for fixed $j$, $i\oplus j$ ranges over $[0,\ell-1]$ as $i$ runs in $[0,\ell-1]$. For simplicity, denote $\omega_k=\sum_{j=0}^t(\beta\theta_j)^{q^{k\ominus j}}$ for $k\in[0,\ell-1]$. It is easy to check that $\omega_{k\oplus 1}=\omega_k^q$. In particular, $\omega_t=\sum_{j=0}^t(\beta\theta_j)^{q^{t-j}}=0$ by the hypothesis. Therefore, $\omega_k=0$ for all $k\in[0,\ell-1]$, and thus ${\rm Tr}(\beta L(\alpha))=\sum_{k=0}^{\ell-1}\omega_k\alpha^{q^k}=0$. The claim is proved.
	
	By the system in Theorem 8 we know $\sum_{j=0}^t(\beta_i\theta_j)^{q^{t-j}}=0$ for $i\in[t]$. So by the claim above, it has ${\rm Im}(L)\subseteq\bigcap_{i=1}^t\beta_{i}^{-1}K$ and thus ${\dim}_B({\rm Im}(L))\leq {\rm dim}_B(\bigcap_{i=1}^t\beta_{i}^{-1}K)=\ell-t$, where the last equality is from Lemma \ref{lem12}. On the other hand, $L(x)$ is a $q$-polynomial over $F$ of degree $q^t$. It follows ${\dim}_B({\rm Im}(L))\geq \ell-t$. Therefore, it can only hold ${\dim}_B({\rm Im}(L))=\ell-t$. So ${\rm Im}(L)=\bigcap_{i=1}^t\beta_{i}^{-1}K$.
\end{proof}

\end{document}